\documentclass[leqno, 12pt]{article}
\usepackage{times}
\usepackage{amssymb,amscd}
\usepackage{amsmath}
\usepackage{amsfonts}
\usepackage{mathrsfs}

\usepackage{amsthm}

\newtheorem{theorem}{Theorem}[section]
\newtheorem{definition}[theorem]{Definition}
\newtheorem{lemma}[theorem]{Lemma}
\newtheorem{corollary}[theorem]{Corollary}

\begin{document}

\date{}

\title{Epistemic extensions of combined classical and intuitionistic propositional logic}
\author{Steffen Lewitzka
\thanks{Instituto de Matem\'atica,
Departamento de Ci\^encia da Computa\c c\~ao,
Universidade Federal da Bahia UFBA,
40170-110 Salvador -- BA,
Brazil,
e-mail: steffenlewitzka@web.de}}
\maketitle

\begin{abstract}
Logic $L$ was introduced by Lewitzka \cite{lewjlc2} as a modal system that combines intuitionistic and classical logic: $L$ is a conservative extension of CPC and it contains a copy of IPC via the embedding $\varphi\mapsto\square\varphi$. In this article, we consider $L3$, i.e. $L$ augmented with S3 modal axioms, define basic epistemic extensions and prove completeness w.r.t. algebraic semantics. The resulting logics combine classical knowledge and belief with intuitionistic truth. Some epistemic laws of \textit{Intuitionistic Epistemic Logic} studied by Artemov and Protopopescu \cite{artpro} are reflected by classical modal principles. In particular, the implications ``\textit{intuitionistic truth} $\Rightarrow$ \textit{knowledge} $\Rightarrow$ \textit{classical truth}" are represented by the theorems $\square\varphi\rightarrow K\varphi$ and $K\varphi\rightarrow\varphi$ of our logic $EL3$, where we are dealing with classical instead of intuitionistic knowledge. Finally, we show that a modification of our semantics yields algebraic models for the systems of Intuitionistic Epistemic Logic introduced in \cite{artpro}.
\end{abstract}

\section{Introduction}

The approach presented in this paper relies on the assumptions of R. Suszko's non-Fregean logic (see, e.g., \cite{blosus, sus1}) and on results of our recent research \cite{lewjlc2, lewjlc1, lewsl}. It is inspired by \cite{lewsl1} and by ideas coming from Intuitionistic Epistemic Logic \cite{artpro}. A non-Fregean logic contains an identity connective and formulas of the form $\varphi\equiv\psi$ expressing that $\varphi$ and $\psi$ have the same meaning, denotation. The basic classical non-Fregean logic is the Sentential Calculus with Identity SCI \cite{blosus} which can be axiomatized by classical propositional logic CPC along with the following identity axioms:\\
\noindent (Id1) $\varphi\equiv\varphi$\\
(Id2) $(\varphi\equiv\psi)\rightarrow (\varphi\leftrightarrow\psi)$\\
(Id3) $(\varphi\equiv\psi)\rightarrow (\chi[x:=\varphi]\equiv\chi[x:=\psi])$\\
where $\varphi[x:=\psi]$ is the formula that results from substituting $\psi$ for every occurrence of variable $x$ in $\varphi$. We refer to (Id1)--(Id3) as the \textit{axioms of propositional identity}.\footnote{Instead of (Id3), Suszko presents a set of three axioms which together are equivalent to (Id3). By a \textit{proposition} we mean the denotation of a formula. Instead of \textit{proposition}, Suszko uses the term \textit{situation}.} According to G. Frege, the meaning of a formula is given by its truth value. This can be formalized by a scheme called by Suszko the \textit{Fregean Axiom}: $(\varphi\leftrightarrow\psi)\rightarrow (\varphi\equiv\psi)$, i.e. two formulas denote the same proposition whenever they have the same truth value. In non-Fregean logics, the Fregean Axiom is not valid.

Recall that C.I. Lewis' modal systems S1--S3 were originally designed as axiomatizations of laws for strict implication $\square(\varphi\rightarrow\psi)$ (see, e.g., \cite{hugcre} for a discussion). One immediately recognizes that all Lewis systems S1--S5 satisfy the axioms (Id1) and (Id2) of propositional identity if $\varphi\equiv\psi$ is defined as strict equivalence $\square(\varphi\rightarrow\psi)\wedge\square(\psi\rightarrow\varphi)$.\footnote{Certain connections between non-Fregean logic and the modal systems S4 and S5 were already investigated by Suszko and Bloom (see \cite{blosus, sus}).} In \cite{lewjlc1, lewsl} we proved that S3 is the weakest Lewis modal logic where strict equivalence satisfies all axioms of propositional identity (Id1)--(Id3). Moreover, in \cite{lewjlc1} we showed that logic S1+SP, i.e. S1 augmented with (Id3) as theorem scheme, which we also call the Substitution Principle SP, has a simple algebraic semantics.\footnote{Recall that there is no known intuitive semantics for S1.} With system S1+SP we proposed a formalization of strict equivalence as propositional identity in the sense of Suszko's intuitive axioms (Id1)--(Id3) above. We were able to show that S1+SP is distinct from S2 and is strictly contained in S3.

S1+SP can be regarded as the weakest modal system that combines Lewis' approach to strict implication with the principles of Suszko's non-Fregean logic SCI. Proceeding from these assumptions, we proposed in \cite{lewjlc2} the modal logic $L$, which is axiomatized by intuitionistic logic IPC, the modal axioms of S1 together with a modal axiom expressing the constructive character of instuitionistic truth, and SP and \textit{tertium non datur} as theorems. $L$ is a \textit{classical} logic in which the modal operator $\square$ plays the role of an intuitionistic truth predicate. Intuitionistically equivalent formulas are identified and denote the same proposition (e.g. $\neg\varphi\equiv\neg\neg\neg\varphi$ is a theorem), while classically equivalent formulas have in general different meanings ($\varphi\equiv\neg\neg\varphi$ is not a theorem, but $\varphi\leftrightarrow\neg\neg\varphi$ is a theorem). $L$ is a conservative modal extension of CPC and contains a copy of IPC via the embedding $\varphi\mapsto\square\varphi$. In this sense, $L$ combines CPC and IPC. In the present paper, we consider $L3$, the $S3$-version of $L$, and investigate basic epistemic extensions. In this setting of combined classical and intuitionistic logic, the question arises in which way classical epistemic laws of knowledge and belief relate to the modal laws of $L3$. We are inspired by the approach developed in \cite{lewsl1} where a non-Fregean semantics for the modeling of epistemic properties is presented. Knowledge or belief of an agent is modeled in a natural way as a set of propositions, i.e. as a subset $\mathit{BEL}\subseteq M$ of the universe $M$ of an algebraic (non-Fregean) model. In the cases of our logics $L$ and $L3$, a non-Fregean model is a Heyting algebra where classical truth is represented by an ultrafilter $\mathit{TRUE}$, and intuitionistic truth is given by the top element of the underlying lattice. Recall that all intuitionistic tautologies denote the top element of any given Heyting algebra, under all assignments. We assume that also our modal axioms are intuitionistically acceptable and therefore should denote the top element, too. Furthermore, we also assume that at least all axioms should be known by the agent. Hence, the top element is a known proposition, i.e. an element of $\mathit{BEL}$. This can be expressed by $\square\varphi\rightarrow K\varphi$. The semantical condition $\mathit{BEL}\subseteq\mathit{TRUE}$ corresponds to the facticity of knowledge, i.e. to the theorem $K\varphi\rightarrow\varphi$. Many further properties of knowledge and belief, such as the distribution law $K(\varphi\rightarrow\psi)\rightarrow (K\varphi\rightarrow K\psi)$, can be modeled semantically by imposing suitable closure conditions on the set $\mathit{BEL}$ of each model.\footnote{In \cite{lewsl1} are modeled also more complex epistemic concepts such as common knowledge in a group of agents.}

Special attention deserves the bridge theorem $\square\varphi\rightarrow K\varphi$ which establishes the connection between the modal and the epistemic part of the logic. Actually, we will need a slightly stronger bridge axiom, namely $\square\varphi\rightarrow\square K\varphi$, in order to warrant that the axioms of propositional identity, in particular (Id3), also hold in the extended epistemic language (see Lemma \ref{100}). That bridge axiom, together with the rule of Axiom Necessitation, yields $\square(\square\varphi\rightarrow\square K\varphi)$, which can be regarded as a representation of the Brouwer-Heyting-Kolmogorov (BHK) reading of \textit{intuitionistic co-reflection} $\varphi\rightarrow K\varphi$, an axiom of the systems $IEL^-$ and $IEL$ of Intuitionistic Epistemic Logic introduced by Artemov and Protopopescu \cite{artpro}. $IEL^-$ and $IEL$ are intuitionistic logics where truth is understood as proof, and epistemic laws are in accordance with the constructive BHK semantics of intuitionistic logic. The intuitionistically unacceptable principle of \textit{reflection} $K\varphi\rightarrow\varphi$ is replaced with \textit{intuitionistic reflection} $K\varphi\rightarrow\neg\neg\varphi$, i.e. ``known propositions cannot be false". We adopt the latter as an axiom and derive classical reflection. Then the implications ``intuitionistic truth $\Rightarrow$ (intuitionistic) knowledge $\Rightarrow$ classical truth", underlying the approach of \cite{artpro}, can be represented by theorems $\square\varphi\rightarrow K\varphi$ and $K\varphi\rightarrow\varphi$ of our modal logics, where, of course, we are dealing with \textit{classical} instead of \textit{intuitionistic} knowledge. By introducing additional modal axioms, we are able to establish further laws for the reasoning about classical knowledge and intuitionistic truth. 

\section{Deductive systems}

The language is inductively defined in the usual way over an infinite set of variables $x_0, x_1, ... $, logical connectives $\wedge$, $\vee$, $\rightarrow$, $\bot$, the modal operator $\square$ and the epistemic operator $K$. $Fm$ denotes the set of all formulas and $Fm_0\subseteq Fm$ is the set of \textit{propositional formulas}, i.e. formulas that neither contain the modal operator $\square$ nor the epistemic operator $K$. We shall use the following abbreviations:\\

\noindent $\neg\varphi :=\varphi\rightarrow\bot$\\
$\top :=\neg\bot$\\
$\varphi\leftrightarrow\psi :=(\varphi\rightarrow\psi)\wedge (\psi\rightarrow\varphi)$\\
$\varphi\equiv\psi := \square(\varphi\rightarrow\psi)\wedge\square(\psi\rightarrow\varphi)$ (``propositional identity = strict equivalence") \\
$\square\Phi :=\{\square\psi\mid\psi\in\Phi\}$, for $\Phi\subseteq Fm$\\

We consider the following \textbf{Axiom Schemes} (INT) and (A1)--(A8)\\

\noindent (INT) all theorems of IPC and their substitution-instances\footnote{A substitution-instance of $\varphi$ is the result of uniformly replacing variables in $\varphi$ by formulas of $Fm$.}\\
(A1) $\square(\varphi\vee\psi)\rightarrow(\square\varphi\vee\square\psi)$ \\
(A2) $\square\varphi\rightarrow\varphi$\\
(A3) $\square(\varphi\rightarrow\psi)\rightarrow\square(\square\varphi\rightarrow\square\psi)$\\
(A4) $\square\varphi\rightarrow\square\square\varphi$\\
(A5) $\neg\square\varphi\rightarrow\square\neg\square\varphi$\\
(A6) $K(\varphi\rightarrow\psi)\rightarrow (K\varphi\rightarrow K\psi)$ (distribution of belief)\\
(A7) $\square\varphi\rightarrow \square K\varphi$ (co-reflection)\\
(A8) $K\varphi\rightarrow\neg\neg\varphi$ (intuitionistic reflection)\\

\noindent and the following \textbf{Theorem Scheme} (T) of \textit{tertium non datur}\\

\noindent (T) $\varphi\vee\neg\varphi$\\

The inference rules are Modus Ponens MP ``From $\varphi$ and $\varphi\rightarrow\psi$ infer $\psi$", and Axiom Necessitation AN ``If $\varphi$ is an axiom, then infer $\square\varphi$". \\

The intended meaning of a formula $\square\varphi$ is: ``there is a proof of $\varphi$" (i.e. $\varphi$ is proved), where proof is understood as intuitionistic truth in the sense of the BHK semantics of intuitionistic logic. Accordingly, $\neg\square\varphi$ means ``$\varphi$ is not proved" (although a proof might be possible), and $\square\neg\varphi$ reads as ``a proof of $\varphi$ is impossible". The above modal axioms then can be read as principles for the reasoning about intuitionistic truth and (classical) knowledge. The basic modal axioms are (A1)--(A3) together with the usual distribution axiom for knowledge (A6) and the bridge axiom (A7). (A2) and (A3) are axioms of Lewis' system S3 of strict implication, given in the style of Lemmon (see, e.g., \cite{hugcre}). Axiom (A1) says that to prove $\varphi\vee\psi$ it is necessary to prove $\varphi$ or to prove $\psi$, which is in line with the BHK semantics.\footnote{($\square\varphi\vee\square\psi)\rightarrow\square( \varphi\vee\psi)$ is derivable in our systems.} Of course, we also expect that proved propositions are classically true. This is expressed by (A2). Recall that, according to the BHK intepretation, a proof of an implicative formula $\varphi\rightarrow\psi$ is given by a (not further specified) \textit{construction} that transforms any given proof of $\varphi$ into a proof of $\psi$. So if we state $\square(\varphi\rightarrow\psi)$, then we assume the existence of such a construction. Of course, if a construction with those properties exists, then we have in particular evidence (i.e. a proof) of the fact that whenever $\varphi$ is proved, there is also a proof of $\psi$. This is the situation described by (A3).  
In this sense, (A3) can be seen as an attempt to translate the constructive content of the BHK interpretation of $\varphi\rightarrow\psi$ into a classical reading. The axioms (A4) and (A5) describe further laws for the reasoning about proofs and correspond to axioms of Lewis' systems S4 and S5, respectively. Finally, (A7) and (A8) are related to axioms of Intuitionistic Epistemic Logic \cite{artpro} and will be discussed below.\\

The distinction between \textit{axioms} and \textit{theorems} in the definition of our deductive systems is important. While all axioms are regarded as intuitionistically acceptable principles, the classical principle of \textit{tertium non datur} is introduced as a theorem scheme. Note that the inference rule AN applies only to axioms.\\ 

If instead of the axioms (INT) we consider (CL) (i.e., all theorems of CPC and their substitution-instances), (A2) and the following (A3')
\begin{equation*}
\square(\varphi\rightarrow\psi)\rightarrow(\square(\psi\rightarrow\chi)\rightarrow\square(\varphi\rightarrow\chi)),
\end{equation*}
as axiom schemes, along with the rules of MP and the Substitution of Proved Strict Equivalents SPSE ``If $(\varphi\equiv\psi)$ is a theorem, then $(\chi[x:=\varphi]\equiv\chi[x:=\psi])$ is a theorem", then we obtain Lewis' modal system S1 (see, e.g., \cite{hugcre} for a similar presentation). Stronger than rule SPSE is the following axiom scheme (Id3) of propositional identity, which we also call the Substitution Property SP:
\begin{equation*}
(\varphi\equiv\psi)\rightarrow (\chi[x:=\varphi]\equiv\chi[x:=\psi]).
\end{equation*}
The system S1+SP, introduced and studied in \cite{lewjlc1}, results from S1 by adding all formulas of the form SP as theorems. Recall that (A3) is the essential axiom scheme of S3, i.e. S3 results from S2 (or even from S1) by adding (A3) as an axiom. By $\square$SP we refer to the collection of all instances of scheme SP prefixed by operator $\square$. System S1+$\square$SP then results from S1 by adding all instances of $\square$SP as theorems. We saw in \cite{lewjlc1} that S1+SP $\subsetneq$ S1+ $\square$SP $\subseteq$ S3. In particular, all instances of $\square$SP are derivable in S3, and S3 is the weakest among Lewis' modal logics with that property. The question whether S1+$\square$SP equals S3, however, is left open -- we believe it can be answered positively. Modal system $L$, introduced in \cite{lewjlc2}, results from S1+SP by replacing (CL) with (INT), adding (A1) as an axiom scheme and adding (T) as a theorem scheme. $L$ is the weakest modal logic that contains the basic laws of Lewis' strict implication and Suszko's non-Fregean logic, and combines IPC and CPC in the sense of \cite{lewjlc2}. Note that $L$ contains the instances of SP as theorems instead of axioms. However, since $\square$ is intended as a predicate for intuitionistic truth, it seems to be reasonable to extend logic $L$ in the sense that rule AN applies to \textit{all} intuitionistically acceptable formulas, including the instances of SP. We know that $\square$SP is contained in S3. Moreover, it is technically easier to work with S3-axioms instead of scheme $\square$SP. For these reasons, we consider in this paper the logic $L3$ which results from $L$ by adding the S3-axiom (A3). That is, $L3$ is given by the axiom schemes (INT) and (A1)--(A3), the theorem scheme (T) and the inference rules of MP and AN. We define the following epistemic extensions of $L3$:

\begin{itemize}
\item $EL3^- = L3 + (A7)$ 
\item $EL3=EL3^- + (A8)$
\item $EL4=EL3 + (A4)$
\item $EL5=EL4 + (A5)$
\end{itemize}

The notion of derivation is defined as usual. For $\Phi\cup\{\varphi\}\subseteq Fm$, we write $\Phi\vdash_\mathcal{L}\varphi$ if there is a derivation of $\varphi$ from $\Phi$ in logic $\mathcal{L}$. \\

Note that by (A2) and (A3), the distribution law of normal modal logis, also called axiom K, $\square(\varphi\rightarrow\psi)\rightarrow (\square\varphi\rightarrow\square\psi)$, is a theorem of all our epistemic logics.
We showed in \cite{lewjlc1, lewsl} that SP (and even $\square$SP) derives in S3. Roughly speaking, SP ensures that propositional identity $\equiv$ defines a congruence relation on the set of modal formulas -- i.e. an equivalence relation that respects the connectives and operators of the underlying modal language. This is a useful, if not necessary, condition for the construction of a natural algebraic semantics. Now we are working with an extended propositional language which besides the modal operator contains an epistemic operator. We have to show that SP still holds in the extended language of our epistemic logics. For the proof we will need S3-axiom (A3) as well as the axioms of co-reflection and distribution of knowledge. 

\begin{lemma}\label{100}
Scheme SP holds: $\vdash_{EL3^-} (\varphi\equiv\psi)\rightarrow (\chi[x:=\varphi]\equiv\chi[x:=\psi])$.
\end{lemma}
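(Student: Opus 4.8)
\noindent The plan is to prove the claim by induction on the structure of $\chi$. Fix $\varphi,\psi$ and, for any formula $\sigma$, write $\sigma':=\sigma[x:=\varphi]$ and $\sigma'':=\sigma[x:=\psi]$, so that the goal is $\vdash_{EL3^-}(\varphi\equiv\psi)\rightarrow(\chi'\equiv\chi'')$; abbreviate the antecedent by $H$. Since $\chi'\equiv\chi''$ unfolds to $\square(\chi'\rightarrow\chi'')\wedge\square(\chi''\rightarrow\chi')$, in each case I derive both boxed implications from $H$ (the deduction-theorem style reasoning ``assume $H$, derive \dots'' is legitimate here because AN is only ever applied to genuine axioms, never to $H$ or to formulas depending on it). Two tools are used throughout: (i) by AN every axiom may be prefixed by $\square$ -- in particular every IPC-theorem and its substitution instances, via (INT), and every instance of (A6); and (ii) the normal distribution law $\square(\alpha\rightarrow\beta)\rightarrow(\square\alpha\rightarrow\square\beta)$ is a theorem (from (A2) and (A3), as noted above), so proved implications may be chained inside the scope of $\square$.

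For the base cases: if $\chi=x$ then $\chi'=\varphi$, $\chi''=\psi$, and $H\rightarrow H$ is an instance of (INT); if $\chi$ is $\bot$ or a variable distinct from $x$ then $\chi'=\chi=\chi''$, so it suffices to derive $\chi\equiv\chi$, and since $\chi\rightarrow\chi$ is an IPC-theorem, AN yields $\square(\chi\rightarrow\chi)$ and hence $\chi\equiv\chi$. For the inductive step for $\wedge,\vee,\rightarrow$, let $\chi=\chi_1\circ\chi_2$ with $\circ\in\{\wedge,\vee,\rightarrow\}$; by the inductive hypothesis $H$ yields $\chi_i'\equiv\chi_i''$, i.e. the boxed implications $\square(\chi_i'\rightarrow\chi_i'')$ and $\square(\chi_i''\rightarrow\chi_i')$ for $i=1,2$. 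For $\circ\in\{\wedge,\vee\}$ the formula $(\alpha\rightarrow\beta)\rightarrow\big((\gamma\rightarrow\delta)\rightarrow((\alpha\circ\gamma)\rightarrow(\beta\circ\delta))\big)$ is an IPC-theorem; prefixing it by $\square$ via AN and applying the distribution law twice turns $\square(\chi_1'\rightarrow\chi_1'')$ and $\square(\chi_2'\rightarrow\chi_2'')$ into $\square(\chi'\rightarrow\chi'')$, and symmetrically for the converse. For $\circ={\rightarrow}$ one uses instead the IPC-theorem $(\beta\rightarrow\alpha)\rightarrow\big((\gamma\rightarrow\delta)\rightarrow((\alpha\rightarrow\gamma)\rightarrow(\beta\rightarrow\delta))\big)$, which is why the converse halves supplied by the inductive hypothesis are needed. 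This is essentially the congruence computation already carried out for the non-epistemic fragment in \cite{lewjlc1, lewsl}.

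For the inductive step for $\square$: if $\chi=\square\chi_1$, the inductive hypothesis gives, under $H$, $\square(\chi_1'\rightarrow\chi_1'')$; instantiating (A3) and applying MP yields $\square(\square\chi_1'\rightarrow\square\chi_1'')$, and likewise $\square(\square\chi_1''\rightarrow\square\chi_1')$, i.e. $\square\chi_1'\equiv\square\chi_1''$, which is $\chi'\equiv\chi''$. For the inductive step for $K$ -- the crucial case -- let $\chi=K\chi_1$; the inductive hypothesis gives, under $H$, $\square(\chi_1'\rightarrow\chi_1'')$. Co-reflection (A7), instantiated with the formula $\chi_1'\rightarrow\chi_1''$, yields $\square K(\chi_1'\rightarrow\chi_1'')$. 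The corresponding instance of (A6), namely $K(\chi_1'\rightarrow\chi_1'')\rightarrow(K\chi_1'\rightarrow K\chi_1'')$, is an axiom, so AN gives $\square\big(K(\chi_1'\rightarrow\chi_1'')\rightarrow(K\chi_1'\rightarrow K\chi_1'')\big)$; distributing $\square$ over this implication and combining with the previous line produces $\square(K\chi_1'\rightarrow K\chi_1'')$. Running the same argument from $\square(\chi_1''\rightarrow\chi_1')$ gives $\square(K\chi_1''\rightarrow K\chi_1')$, so $K\chi_1'\equiv K\chi_1''$, i.e. $\chi'\equiv\chi''$.

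I expect the $K$-case to be the only step requiring real thought, and it is exactly where the strengthened bridge axiom (A7) $\square\varphi\rightarrow\square K\varphi$ is needed rather than the weaker $\square\varphi\rightarrow K\varphi$: one must transport a \emph{proved} implication $\square(\chi_1'\rightarrow\chi_1'')$ to a \emph{proved} implication $\square(K\chi_1'\rightarrow K\chi_1'')$, and this requires both the $\square$-prefixed distribution of $K$ (obtained by AN from (A6)) and a premise of the form $\square K(\dots)$, which only (A7) supplies. Everything else is routine propagation of the inductive hypothesis through AN and the derived axiom K, with (A3) entering only in the $\square$-case -- in accordance with the remark that S3-strength is what makes $\equiv$ a congruence on the extended language.
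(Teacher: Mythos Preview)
Your proof is correct and follows essentially the same route as the paper's: both reduce the claim to showing that $\equiv$ is a congruence with respect to each connective and operator, handling the binary connectives via boxed IPC-theorems plus distribution, the $\square$-case via axiom (A3), and the $K$-case via (A7) followed by AN applied to (A6) and distribution. The only difference is presentational---the paper isolates the congruence conditions (a)--(c) and leaves the induction implicit, whereas you carry out the induction explicitly and add a careful remark on why the deduction-theorem style reasoning is legitimate (AN never touches the hypothesis $H$).
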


\begin{proof}
It is enough to show that the following formulas are theorems:\\
\noindent (a) $((\varphi_1\equiv\psi_1)\wedge (\varphi_2\equiv\psi_2)) \rightarrow (\varphi_1 * \varphi_2)\equiv (\psi_1 * \psi_2)$, for $*\in \{\vee,\wedge,\rightarrow\}$\\
(b) $(\varphi\equiv\psi)\rightarrow (\square\varphi\equiv\square\psi)$\\
(c) $(\varphi\equiv\psi)\rightarrow (K\varphi\equiv K\psi)$\\
The assertion then follows by induction on the complexity of formula $\chi$. We proved in \cite{lewsl, lewjlc2} that in modal system S3 the relation of strict equivalence $\equiv$ satisfies Suszko's axioms of propositional identity (i.e. essentially (a)) and additionally (b). In the following we argue similarly. We will use the fact that $\square(\varphi\wedge\psi)\leftrightarrow (\square\varphi\wedge\square\psi)$ is a theorem of S3 and it derives in the same way in $L3$ and our epistemic logics. Assume $*$ is the connective $\rightarrow$ and note that $((\varphi_1\leftrightarrow\psi_1)\wedge (\varphi_2\leftrightarrow\psi_2)) \rightarrow (\varphi_1 \rightarrow \varphi_2)\leftrightarrow (\psi_1 \rightarrow \psi_2)$ is a theorem of $IPC$. Then rule AN and distribution yield $((\varphi_1\equiv\psi_1)\wedge (\varphi_2\equiv\psi_2)) \rightarrow (\varphi_1 \rightarrow \varphi_2)\equiv (\psi_1 \rightarrow \psi_2)$. The cases of the remaining logical connectives are shown analogously. The formulas of (b) derive with the help of the S3-axiom (A3). Finally, we consider (c) which involves the new operator $K$. We start with $\varphi\equiv\psi$, i.e. $\square(\varphi\rightarrow\psi)\wedge\square(\psi\rightarrow\varphi)$. By co-reflection, this implies $\square K(\varphi\rightarrow\psi)\wedge \square K(\psi\rightarrow\varphi)$ which in turn implies $\square(K\varphi\rightarrow K\psi)\wedge \square(K\psi\rightarrow K\varphi)$ (apply rule AN to axiom (A6) and then apply distribution), i.e. $K\varphi\equiv K\psi$. Thus, $(\varphi\equiv\psi)\rightarrow (K\varphi\equiv K\psi)$ is a theorem of all our epistemic logics.
\end{proof}

The following result expresses the fact that (in every given model) there is exactly one necessary proposition -- the proposition denoted by $\top$. For a proof that relies on SP, we refer the reader to [\cite{lewjlc1}], Lemma 2.3]. In our modal logics, the proposition denoted by $\top$ stands for intuitionistic truth. Then the following scheme of biconditionals says in particular that $\square\varphi$ is classically true iff $\varphi$ holds intuitionistically.

\begin{lemma}\label{110}
$\vdash_{EL3^-}\square\varphi\leftrightarrow (\varphi\equiv\top)$.
\end{lemma}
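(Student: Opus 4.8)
The plan is to prove the biconditional $\square\varphi\leftrightarrow(\varphi\equiv\top)$ by establishing each direction separately, working entirely inside $EL3^-$ and using only the modal axioms (A1)--(A3), (INT), the theorem scheme (T), and the rules MP and AN; the epistemic axioms (A7), (A8) are not needed here, and in fact the whole argument already goes through in $L3$. Recall that $\varphi\equiv\top$ unfolds by definition to $\square(\varphi\rightarrow\top)\wedge\square(\top\rightarrow\varphi)$, and since $\top:=\neg\bot=\bot\rightarrow\bot$, the formula $\varphi\rightarrow\top$ is an intuitionistic theorem; also $\top\rightarrow\varphi$ is intuitionistically equivalent to $\varphi$. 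So morally $\varphi\equiv\top$ should collapse to $\square(\top\rightarrow\varphi)$, and then to $\square\varphi$.

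For the direction $\square\varphi\rightarrow(\varphi\equiv\top)$: first note that $\varphi\rightarrow\top$ is a theorem of IPC, hence an axiom by (INT), so AN gives $\square(\varphi\rightarrow\top)$ outright. For the other conjunct I would start from the IPC-theorem $\varphi\rightarrow(\top\rightarrow\varphi)$; applying AN gives $\square(\varphi\rightarrow(\top\rightarrow\varphi))$, and then the derived axiom K (which, as the excerpt notes, follows from (A2) and (A3)) yields $\square\varphi\rightarrow\square(\top\rightarrow\varphi)$. Conjoining with the unconditionally available $\square(\varphi\rightarrow\top)$ gives $\square\varphi\rightarrow(\square(\varphi\rightarrow\top)\wedge\square(\top\rightarrow\varphi))$, which is exactly $\square\varphi\rightarrow(\varphi\equiv\top)$ using (as in the proof of Lemma \ref{100}) the fact that $\square(\chi_1\wedge\chi_2)\leftrightarrow(\square\chi_1\wedge\square\chi_2)$ — actually here one only needs the trivial propositional conjunction step, not that S3-fact, so this direction is routine.

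For the converse $(\varphi\equiv\top)\rightarrow\square\varphi$: from $\varphi\equiv\top$ take the second conjunct $\square(\top\rightarrow\varphi)$. Since $\top\rightarrow(\top\rightarrow\varphi)\rightarrow\varphi$ is not quite what I want, I instead use that $(\top\rightarrow\varphi)\rightarrow\varphi$ — wait, this needs $\top$, which is provable: indeed $\top$ is an IPC-theorem, so $((\top\rightarrow\varphi)\rightarrow\varphi)$ follows from $\top$ in IPC, i.e. $\top\rightarrow((\top\rightarrow\varphi)\rightarrow\varphi)$ and hence $(\top\rightarrow\varphi)\rightarrow\varphi$ are IPC-theorems. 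Applying AN to this last theorem gives $\square((\top\rightarrow\varphi)\rightarrow\varphi)$, and then axiom K yields $\square(\top\rightarrow\varphi)\rightarrow\square\varphi$. Combined with the extraction of the second conjunct from $\varphi\equiv\top$, this gives $(\varphi\equiv\top)\rightarrow\square\varphi$. Finally, conjoining the two implications and using the IPC-level manipulation of $\leftrightarrow$ yields the claimed biconditional.

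The only mild subtlety — and the step I would watch most carefully — is making sure every formula to which AN is applied is genuinely an \emph{axiom} (i.e. a substitution-instance of an IPC-theorem, per (INT)) rather than merely a theorem, since AN is restricted to axioms; each of $\varphi\rightarrow\top$, $\varphi\rightarrow(\top\rightarrow\varphi)$, and $(\top\rightarrow\varphi)\rightarrow\varphi$ is indeed an IPC-theorem (the last because $\top$ is), so AN applies legitimately. Everything else is propositional bookkeeping plus the derived axiom K. The excerpt already points to [\cite{lewjlc1}, Lemma 2.3] for an SP-based proof; the route sketched here is a direct AN-and-K derivation that avoids invoking SP.
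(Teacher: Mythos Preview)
Your proof is correct. Each application of AN is to a genuine axiom: $\varphi\rightarrow\top$, $\varphi\rightarrow(\top\rightarrow\varphi)$, and $(\top\rightarrow\varphi)\rightarrow\varphi$ are all substitution-instances of IPC theorems (the last one because $((\bot\rightarrow\bot)\rightarrow x)\rightarrow x$ is already provable in IPC), and the remaining steps are the derived K-law plus propositional conjunction manipulation. Your observation that the argument already works in $L3$, without the epistemic axioms, is also correct.

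The paper itself does not prove the lemma but refers to an argument in \cite{lewjlc1} that ``relies on SP''. That route would, for the direction $(\varphi\equiv\top)\rightarrow\square\varphi$, instantiate the Substitution Principle with $\chi=\square x$ to obtain $(\varphi\equiv\top)\rightarrow(\square\varphi\equiv\square\top)$, then combine with the theorem $\square\top$ (via AN) and (Id2) to extract $\square\varphi$. Your route bypasses SP entirely by working directly with AN and K: you never need the congruence property of $\equiv$, only that $\square$ distributes over provable implications. This is more elementary and self-contained, and in particular it does not rely on Lemma~\ref{100}. The SP-based proof, on the other hand, generalises immediately to any context $\chi$ in place of $\square x$, which is why the paper packages it that way; but for this specific lemma your direct derivation is cleaner.
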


The systems $IEL^-$ and $IEL$ of Intuitionistic Epistemic Logic studied in \cite{artpro} are axiomatized by (INT), (A6), intuitionistic co-reflection $\varphi\rightarrow K\varphi$ instead of (A7), and, only in case of $IEL$, additionally (A8). The only inference rule is MP.

In $IEL^-$ and $IEL$, the formalized epistemic principles harmonize with the constructive BHK reading. The axiom of intutionistic co-reflection $\varphi\rightarrow K\varphi$ is evident under the assumptions that intuitionistic truth is proof, proof yields (a strict kind of) verification, and verification yields (intuitionistic) knowledge/belief. While the facticity axiom $K\varphi\rightarrow\varphi$ of classical knowledge cannot be justified under the BHK reading, intuitionistic reflection, $K\varphi\rightarrow\neg\neg\varphi$ is acceptable. In this sense, $IEL^-$ and $IEL$ are intuitionistic logics of \textit{intuitionistic} knowledge and belief. \\

The logics presented in this paper are logics of \textit{classical} knowledge and belief: for example, $K\varphi\vee\neg K\varphi$ and $K\varphi\rightarrow\varphi$ are theorems of $EL3$. The systems result from the addition of classical epistemic principles to $L3$ as a classical modal logic for the reasoning about intuitionistic truth. The bridge axiom (A7) 
\begin{equation}\label{classco}
\square\varphi\rightarrow \square K\varphi,
\end{equation}
plays a key role in this setting of combined classical epistemic and intuitionistic logic. Since it is an axiom, we may apply rule AN and obtain
\begin{equation}\label{classco2}
\square(\square\varphi\rightarrow \square K\varphi),
\end{equation}
which, in a sense, mirrors the axiom of intuitionistic co-reflection
\begin{equation}\label{intco}
\varphi\rightarrow K\varphi
\end{equation}
of Intuitionistic Epistemic Logic. In fact, if ``there is a construction that converts any given proof of $\varphi$ into a proof of $K\varphi$" (the BHK reading of \eqref{intco}), then that construction represents a kind of proof, and thus we may state that ``there is evidence that if $\varphi$ is proved, then $K\varphi$ is proved" (the classical reading of \eqref{classco2}). The described translation of \eqref{intco} into \eqref{classco} and \eqref{classco2} is similar to the above discussed meaning of axiom (A3) and illustrates the way we are reasoning about intuitionistic truth in our modal systems. In this sense, \eqref{classco2} can be seen as a classical representation of intuitionistic co-reflection. We refer to \eqref{classco} as the axiom of \textit{classical} co-reflection. \\

In the following, we give some examples of derivations in our modal logics. The derived theorems represent laws for the reasoning about classical knowledge and intuitionistic truth. Some of those laws correspond to related intuitionistic principles valid in $IEL$. 
However, since knowledge is classical in our modal logics, we cannot expect to find appropriate classical representations for \textit{all} intuitionistic principles of $IEL$. 

It is argued in \cite{artpro} that \textit{reflection}, $K\varphi\rightarrow\varphi$, as a law of classical knowledge, is not acceptable intuitionistically. This corresponds to the fact that in our classical modal logics, the formula $\square K\varphi\rightarrow\square\varphi$ is not derivable, as we shall see in Theorem \ref{660} below. Instead of reflection, we adopt \textit{intuitionistic reflection} $K\varphi\rightarrow\neg\neg\varphi$ as an axiom which is ``acceptable both classically and intuitionistically"  (see \cite{artpro} for a discussion). By \textit{tertium non datur}, reflection then derives as a theorem.

\begin{theorem}\label{120}
$EL3$ has classical reflection: $\vdash_{EL3}K\varphi\rightarrow\varphi$.
\end{theorem}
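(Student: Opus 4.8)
The plan is to derive classical reflection $K\varphi\rightarrow\varphi$ in $EL3$ from the axiom of intuitionistic reflection (A8) $K\varphi\rightarrow\neg\neg\varphi$ together with the theorem scheme (T) of \textit{tertium non datur}. The key observation is that $\varphi\vee\neg\varphi$ is available as a theorem, and classically (i.e., using (INT) plus (T) as theorems, which together give full CPC reasoning at the propositional level) the formula $(\neg\neg\varphi\wedge(\varphi\vee\neg\varphi))\rightarrow\varphi$ is derivable. So the first step is to record that $(\varphi\vee\neg\varphi)$ is a theorem of $EL3$ (by (T)), and that the propositional tautology $\bigl(\neg\neg\varphi\rightarrow((\varphi\vee\neg\varphi)\rightarrow\varphi)\bigr)$ is a theorem (it is already intuitionistically valid: from $\neg\neg\varphi$ and $\neg\varphi$ one gets $\bot$, hence $\varphi$, so the disjunction elimination goes through constructively — thus this piece does not even need (T)).

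Next I would chain these together with Modus Ponens. From axiom (A8) we have $\vdash_{EL3} K\varphi\rightarrow\neg\neg\varphi$. From the intuitionistic tautology above and MP we obtain $\vdash_{EL3}\neg\neg\varphi\rightarrow\varphi$, using that $\varphi\vee\neg\varphi$ is a theorem: more precisely, from $\vdash\neg\neg\varphi\rightarrow((\varphi\vee\neg\varphi)\rightarrow\varphi)$ and $\vdash\varphi\vee\neg\varphi$ one derives $\vdash\neg\neg\varphi\rightarrow\varphi$ by a further propositional manipulation (permuting antecedents, then MP). Composing $K\varphi\rightarrow\neg\neg\varphi$ with $\neg\neg\varphi\rightarrow\varphi$ via the transitivity of implication (again a theorem of IPC, hence of $EL3$) yields $\vdash_{EL3}K\varphi\rightarrow\varphi$, as desired.

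There is essentially no obstacle here: the whole argument is a short propositional derivation, and the only nontrivial ingredients — (A8) as an axiom and (T) as a theorem scheme — are part of the definition of $EL3$. The one point worth stating carefully is the justification that $EL3$ proves all classical propositional tautologies in its (possibly modal/epistemic) language: this follows because (INT) supplies all IPC theorems and their substitution instances, (T) supplies $\varphi\vee\neg\varphi$ for arbitrary formulas $\varphi$, and IPC extended by \textit{tertium non datur} is classically complete, so every substitution instance of a classical propositional tautology is derivable using MP. With that remark in place, the derivation $K\varphi\rightarrow\neg\neg\varphi\rightarrow\varphi$ completes the proof.
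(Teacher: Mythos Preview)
Your proposal is correct and follows exactly the route the paper indicates: combine axiom (A8) $K\varphi\rightarrow\neg\neg\varphi$ with the theorem scheme (T) of \textit{tertium non datur} to obtain $\neg\neg\varphi\rightarrow\varphi$, and then chain the implications. The paper merely states this in one sentence (``By \textit{tertium non datur}, reflection then derives as a theorem''), so your write-up is simply a more detailed version of the same argument.
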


Theorem 7 of \cite{artpro} states that $\neg K\varphi\leftrightarrow K\neg\varphi$ is a theorem of $IEL$. Of course, the right-to-left implication is also a law of classical knowledge. The left-to-right implication, however, would represent a very strong epistemic property in a classical setting: ``for any given proposition $p$, if $p$ is not known, then its negation is known". That is, knowledge would collapse into classical truth (recall that $K\varphi\vee\neg K\varphi$ is a theorem of classical knowledge). We show that the related yet weaker statement ``if a given proposition $p$ is not known, then it is known that $p$ is not proved" holds in $EL5$.

\begin{theorem}\label{150}
$\vdash_{EL5}\neg K\varphi \rightarrow K\neg\square\varphi$.
\end{theorem}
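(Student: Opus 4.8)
The plan is to chain together instances of the axioms (A2), (A5) and (A7) using only intuitionistic propositional reasoning; in particular \emph{tertium non datur} is not needed, so the argument in fact goes through already in $EL3^-+(A5)$.

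First I would record the auxiliary theorem $\vdash\square\varphi\rightarrow K\varphi$. By the co-reflection axiom (A7) we have $\square\varphi\rightarrow\square K\varphi$, and the instance $\square K\varphi\rightarrow K\varphi$ of (A2), together with transitivity of $\rightarrow$ (a substitution-instance of an IPC-theorem, hence covered by (INT)) and MP, yields $\square\varphi\rightarrow K\varphi$. Taking the contrapositive — which is a substitution-instance of the IPC-theorem $(p\rightarrow q)\rightarrow(\neg q\rightarrow\neg p)$, so again available via (INT) and MP — gives
\[
\vdash_{EL5}\ \neg K\varphi\rightarrow\neg\square\varphi .
\]

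Next I would push $\neg\square\varphi$ upward through the modal axioms. The instance $\neg\square\varphi\rightarrow\square\neg\square\varphi$ of (A5) turns $\neg\square\varphi$ into $\square\neg\square\varphi$; the instance of co-reflection (A7) at the formula $\neg\square\varphi$, namely $\square\neg\square\varphi\rightarrow\square K\neg\square\varphi$, then gives $\square K\neg\square\varphi$; and finally the instance $\square K\neg\square\varphi\rightarrow K\neg\square\varphi$ of (A2) strips the outer box. Composing these three implications with the displayed one above, once more using transitivity of $\rightarrow$ and MP, produces $\vdash_{EL5}\neg K\varphi\rightarrow K\neg\square\varphi$.

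There is no genuine obstacle here; the only points requiring care are that (A2), (A5) and (A7) are axiom \emph{schemes}, so that they may legitimately be instantiated at the compound formulas $\neg\square\varphi$ and $K\neg\square\varphi$, and that all the connective glue (transitivity of implication, contraposition) is intuitionistically valid and hence supplied by (INT). The conceptual content is simply the observation that (A5) is precisely the principle converting the hypothesis $\neg\square\varphi$ (obtained from $\neg K\varphi$ via weak classical co-reflection) into a $\square$-formula, after which co-reflection (A7) and reflection (A2) for $\square$ do the rest.
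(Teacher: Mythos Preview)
Your proof is correct and follows essentially the same route as the paper: derive $\neg K\varphi\rightarrow\neg\square\varphi$ from (A7) and (A2), apply (A5) to reach $\square\neg\square\varphi$, and then use co-reflection together with (A2) again to obtain $K\neg\square\varphi$. The only cosmetic difference is that the paper contraposes (A7) and (A2) separately before chaining, whereas you first compose them into $\square\varphi\rightarrow K\varphi$ and contrapose once.
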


\begin{proof}
By contrapositions of (A7) and (A2), we get $\neg K\varphi\rightarrow\neg\square\varphi$. Formula $\neg\square\varphi\rightarrow\square\neg\square\varphi$ is an instance of (A5), and $\square\neg\square\varphi\rightarrow K\neg\square\varphi$ is obtained from co-reflection combined with (A2). Transitivity of implication yields $\neg K\varphi \rightarrow K\neg\square\varphi$.\footnote{Using the fact that $\square(\varphi\rightarrow\psi)\rightarrow (\square(\psi\rightarrow\chi)\rightarrow\square(\varphi\rightarrow\chi))$ and $\square(\varphi\rightarrow\psi)\rightarrow \square(\neg\psi\rightarrow\neg\varphi)$ are theorems, one actually can show the stronger result $\vdash_{EL5}\square(\neg K\varphi \rightarrow K\neg\square\varphi)$.}
\end{proof}

In IEL, $\neg K\varphi\leftrightarrow\neg\varphi$ is valid [\cite{artpro}, Theorem 8]. The right-to-left implication is also a law of classical knowledge. By the BHK reading of the left-to-right implication, we may state that ``the impossibility of a proof of $K\varphi$ yields the impossibility of a proof of $\varphi$". Actually, this is the contraposition of intuitionistic co-reflection $\varphi\rightarrow K\varphi$ which relies on the intuition that proof yields knowledge. As argued above, intuitionistic co-reflection is, in a sense, mirrored by the $EL3^-$-theorem $\square(\square\varphi\rightarrow \square K\varphi)$. We present some further derivable modal laws:  

\begin{theorem}\label{160}
\noindent
\begin{enumerate}
\item $\vdash_{E3^-}\square\neg\square K\varphi\rightarrow \square\neg\square\varphi$. ``The impossibility of a proof of $\square K\varphi$ implies the impossibility of a proof of $\square\varphi$."
\item $\vdash_{EL4} \square\varphi\rightarrow K\square\varphi$. ``If $\varphi$ is proved, then it is known that $\varphi$ is proved." 
\item $\vdash_{EL5} \neg\square\varphi\rightarrow K\neg\square\varphi$. ``If $\varphi$ is not proved, then it is known that $\varphi$ is not proved." 
\item $\vdash_{EL5} \neg K\square\varphi\leftrightarrow \square\neg\square\varphi$. ``The fact that $\square\varphi$ is not known is equivalent to the impossibility of a proof of $\square\varphi$."
\item $\vdash_{EL5} \neg\square K\varphi\rightarrow \square\neg\square\varphi$. ``If there is no proof of $K\varphi$, then a proof of $\square\varphi$ is impossible."\footnote{Note that in $EL5$, $\neg\square K\varphi$ ``there is no proof of $K\varphi$" is actually equivalent to $\square\neg\square K\varphi$ ``a proof of $\square K\varphi$ is impossible".} 
\item $\vdash_{EL3}\square K\varphi\rightarrow\neg\square\neg\varphi$. ``If $K\varphi$ is proved, then a proof of $\varphi$ is possible."
\end{enumerate}
\end{theorem}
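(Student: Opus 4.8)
The plan is to obtain each of (i)--(vi) as a short chain of implications, composed by transitivity of $\rightarrow$, using only the cited axioms, the facticity axiom (A2), the distribution theorem $\square(\varphi\rightarrow\psi)\rightarrow(\square\varphi\rightarrow\square\psi)$ (axiom K, already noted to hold in all these systems), the two ``strict'' theorems $\square(\varphi\rightarrow\psi)\rightarrow(\square(\psi\rightarrow\chi)\rightarrow\square(\varphi\rightarrow\chi))$ and $\square(\varphi\rightarrow\psi)\rightarrow\square(\neg\psi\rightarrow\neg\varphi)$ (cf.\ the footnote to Theorem~\ref{150}), ordinary IPC-contraposition, and -- for (i) -- the rule AN applied to the \emph{axiom} (A7). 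Since the propositional base is intuitionistic I will make sure every gluing step is an IPC-theorem; in particular I will never use double-negation elimination.

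Items (ii), (iii), (v) share one skeleton: push a formula under $\square$ by (A4) or (A5), feed it to co-reflection (A7), and strip the outer box by (A2). Thus (ii) is $\square\varphi\rightarrow\square\square\varphi$ by (A4), $\square\square\varphi\rightarrow\square K\square\varphi$ by (A7) with $\psi:=\square\varphi$, $\square K\square\varphi\rightarrow K\square\varphi$ by (A2); (iii) is $\neg\square\varphi\rightarrow\square\neg\square\varphi$ by (A5), $\square\neg\square\varphi\rightarrow\square K\neg\square\varphi$ by (A7), $\square K\neg\square\varphi\rightarrow K\neg\square\varphi$ by (A2) -- essentially the tail of the proof of Theorem~\ref{150}; and (v) is the contraposition $\neg\square K\varphi\rightarrow\neg\square\varphi$ of (A7) followed by (A5). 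For (iv): the forward direction contraposes the just-proved (ii) to $\neg K\square\varphi\rightarrow\neg\square\varphi$ and then applies (A5); the backward direction uses $\square\neg\square\varphi\rightarrow\neg\square\varphi$ (A2) followed by the contraposition $\neg\square\varphi\rightarrow\neg K\square\varphi$ of classical reflection (Theorem~\ref{120}). Item (vi) combines $\square K\varphi\rightarrow K\varphi$ (A2) with $K\varphi\rightarrow\neg\neg\varphi$ (A8) to get $\square K\varphi\rightarrow\neg\neg\varphi$, and pairs this with the IPC-contrapositive $\neg\neg\varphi\rightarrow\neg\square\neg\varphi$ of the (A2)-instance $\square\neg\varphi\rightarrow\neg\varphi$.

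The one place needing a genuine (if small) idea is (i): ordinary contraposition of (A7) only yields the \emph{derived} theorem $\neg\square K\varphi\rightarrow\neg\square\varphi$, which cannot legitimately be prefixed with $\square$; so I first apply AN to the \emph{axiom} (A7), obtaining $\square(\square\varphi\rightarrow\square K\varphi)$, then use strict contraposition to get $\square(\neg\square K\varphi\rightarrow\neg\square\varphi)$, and finally distribution to conclude $\square\neg\square K\varphi\rightarrow\square\neg\square\varphi$. Thus the ``obstacles'' are really bookkeeping points: remembering that AN is licensed only for axioms (relevant in (i)) and that the intuitionistic base forbids collapsing $\neg\neg\varphi$ to $\varphi$ (relevant in (vi)), so negations there must be discharged via (A2); everything else is a routine composition of the implications listed above.
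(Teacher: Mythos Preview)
Your argument is correct throughout and, for items (i)--(v), essentially identical to the paper's proof (the paper also singles out the two strict theorems you call ``strict contraposition'' and ``strict transitivity'' and then chains the same implications; its proof of (iii) even has a typographical slip that your version silently corrects).

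The one genuine divergence is (vi). The paper applies AN to the axiom (A8) to obtain $\square(K\varphi\rightarrow\neg\neg\varphi)$, then distributes twice to reach $\square K\varphi\rightarrow(\square(\varphi\rightarrow\bot)\rightarrow\square\bot)$, and finally invokes $\bot\equiv\square\bot$ (justified via SP) to replace $\square\bot$ by $\bot$ and arrive at $\square K\varphi\rightarrow\neg\square\neg\varphi$. Your route is shorter and more elementary: you never need AN, distribution, or the SP-based identity $\bot\equiv\square\bot$; two instances of (A2) plus (A8) and IPC contraposition suffice. What the paper's detour buys is that its intermediate steps all live under an outer $\square$ (one actually gets $\square(\square K\varphi\rightarrow\neg\square\neg\varphi)$ along the way), whereas your argument, by using the bare axioms (A2) and (A8) rather than their necessitated forms, yields only the unboxed theorem stated in (vi). Since only the unboxed form is claimed, your simplification is a net gain here.
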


\begin{proof}
First, we observe that \\
\noindent (a) $\square(\varphi\rightarrow\psi)\rightarrow (\square(\psi\rightarrow\chi)\rightarrow \square(\varphi\rightarrow\chi))$ and \\
\noindent (b) $\square(\varphi\rightarrow\psi)\rightarrow\square(\neg\psi\rightarrow\neg\varphi)$ \\
are theorems (apply AN to suitable theorems of IPC and consider distribution). 
(i): By AN and (A7), $\square(\square\varphi\rightarrow \square K\varphi)$. By (b), $\square(\neg\square K\varphi\rightarrow \neg\square\varphi)$. Now apply distribution. \\
(ii): $\square\varphi\rightarrow\square\square\varphi$, $\square\square\varphi\rightarrow \square K\square\varphi$ and $\square K\square\varphi\rightarrow K\square\varphi$ are axioms of $EL4$ and yield (ii). \\
(iii): $\neg\square K\varphi\rightarrow\square\neg\square K\varphi$ is an axiom of $EL5$, and $\square\neg\square\varphi\rightarrow K\neg\square\varphi$ is a theorem of $EL3$ (by co-reflection and (A2)). Apply transitivity of implication.\\ 
(iv): By contraposition of (ii), we get $\neg K\square\varphi\rightarrow\neg\square\varphi$. $\neg\square\varphi\rightarrow\square\neg\square\varphi$ is an axiom of $EL5$. By transitivity of implication, we get the left-to-right implication. The right-to-left implication follows from (A2) along with the contraposition of the theorem $K\square\varphi\rightarrow\square\varphi$.\\
(v): This follows from the contrapostion of (A7) along with the $EL5$-axiom $\neg\square\varphi\rightarrow\square\neg\square\varphi$.\\
(vi): First, note that $\bot\rightarrow\square\bot$ and $\square\bot\rightarrow\bot$ are axioms of our modal logics: the former is a substitution-instance of the IPC theorem $\bot\rightarrow x$, and the latter is an instance of (A2). By AN, $\bot\equiv\square\bot$. SP then ensures that $\bot$ and $\square\bot$ can be replaced by each other, in every context. Since $K\varphi\rightarrow\neg\neg\varphi$ is an axiom of $EL3$, we may apply AN and obtain $\square (K\varphi\rightarrow\neg\neg\varphi)$. By distribution and MP, we derive $\square K\varphi\rightarrow\square((\varphi\rightarrow\bot)\rightarrow\bot)$. Again by distribution, together with transitivity of implication, we derive $\square K\varphi\rightarrow (\square(\varphi\rightarrow\bot)\rightarrow\bot)$, i.e., $\square K\varphi\rightarrow\neg\square\neg\varphi$.
\end{proof}

In logic IEL, ``no truth is unverifiable", i.e. $\neg(\neg K\varphi\wedge\neg K\neg\varphi)$ is a theorem [\cite{artpro}, Theorem 9]. The classical reading would yield: ``for any proposition $p$, either $p$ is known or its negation is known", a condition which again would imply the equivalence of truth and knowledge. We are able to derive the following weaker condition in logic $EL5$: 

\begin{theorem}\label{170}
$\vdash_{EL5} K\square\varphi\vee K\neg\square\varphi$.
\end{theorem}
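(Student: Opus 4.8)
The plan is to reduce the claim to \emph{tertium non datur} together with Theorem~\ref{160}(iii). First I would invoke (T) to obtain $\square\varphi\vee\neg\square\varphi$ as a theorem of $EL5$. I then want to show that each disjunct implies one of the disjuncts of the target formula $K\square\varphi\vee K\neg\square\varphi$, after which the claim follows by a routine intuitionistic argument (from $\alpha\vee\beta$, $\alpha\rightarrow\gamma$, $\beta\rightarrow\delta$ infer $\gamma\vee\delta$, which is an instance of (INT) plus MP).

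For the first disjunct, $\square\varphi\rightarrow K\square\varphi$ is exactly Theorem~\ref{160}(ii), so as a theorem of $EL4\subseteq EL5$ it is available. For the second disjunct, I need $\neg\square\varphi\rightarrow K\neg\square\varphi$, which is precisely Theorem~\ref{160}(iii), a theorem of $EL5$. Combining these two implications with the disjunction $\square\varphi\vee\neg\square\varphi$ via the disjunction-elimination scheme yields $K\square\varphi\vee K\neg\square\varphi$, completing the derivation.

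I do not expect any genuine obstacle here, since the statement is essentially a corollary of Theorem~\ref{160}(ii)--(iii) once \emph{tertium non datur} supplies the case split; the only mild subtlety is that the case split must be performed on $\square\varphi$ rather than on $\varphi$ itself (a case split on $\varphi$ would only give $K\varphi\vee K\neg\varphi$ after applying co-reflection, which is not what is wanted and in fact is not derivable). The proof is therefore short: instantiate (T) at $\square\varphi$, cite Theorem~\ref{160}(ii) and (iii), and close under disjunction. If a slightly sharper formulation is desired, one could further note that by the footnote to Theorem~\ref{150} both implications actually hold in boxed form in $EL5$, so that even $\square(K\square\varphi\vee K\neg\square\varphi)$ is derivable; but for the stated theorem the unboxed argument suffices.
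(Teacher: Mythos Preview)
Your proposal is correct and follows essentially the same route as the paper: instantiate (T) at $\square\varphi$, invoke Theorem~\ref{160}(ii) and (iii), and close via the intuitionistic disjunction-elimination scheme $((x_1\rightarrow y_1)\wedge(x_2\rightarrow y_2))\rightarrow((x_1\vee x_2)\rightarrow(y_1\vee y_2))$ with two applications of MP. Your side remark about why the case split must be on $\square\varphi$ rather than $\varphi$ is a nice addition not made explicit in the paper.
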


\begin{proof}
By \textit{tertium non datur}, $\square\varphi\vee\neg\square\varphi$. By Theorem \ref{160} (ii) and (iii), $(\square\varphi\rightarrow K\square\varphi)\wedge(\neg\square\varphi\rightarrow K\neg\square\varphi)$ is a theorem of $EL5$. The formula $((\square\varphi\rightarrow K\square\varphi) \wedge (\neg\square\varphi\rightarrow K\neg\square\varphi))\rightarrow ((\square\varphi\vee\neg\square\varphi)\rightarrow  (K\square\varphi\vee  K\neg\square\varphi))$ is a substitution-instance of the intuitionistic theorem $((x_1\rightarrow y_1)\wedge (x_2\rightarrow y_2))\rightarrow ((x_1\vee x_2)\rightarrow (y_1\vee y_2))$ and is therefore an axiom of $EL5$. Now we may apply Modus Ponens two times and obtain $K\square\varphi\vee K\neg\square\varphi$.
\end{proof}  

\section{Algebraic semantics} 

We expect the reader to be familiar with some basic lattice-theoretical notions such as (prime, ultra-) filters on lattices. We adopt the notation from \cite{lewjlc1,lewjlc2} and write a bounded lattice as an algebraic structure $\mathcal{H}=(H,f_\bot,f_\top,f_\vee,f_\wedge)$, where $f_\bot,f_\top$ are the least and the greatest elements w.r.t. the induced lattice order, and $f_\vee, f_\wedge$ are the binary operations for join and meet, respectively. Recall that a Heyting algebra can be defined as a bounded lattice together with an additional binary operation for implication $f_\rightarrow$ which maps any two elements $m,m'$ to the supremum of $\{m''\mid f_\wedge(m,m'')\le m'\}$, where $\le$ is the lattice order. That supremum $f_\rightarrow(m,m')$ is also called the relative pseudo-complement of $m$ with respect to $m'$. The pseudo-complement $f_\neg(m)$ of an element $m$ then is defined by $f_\neg(m)=f_\rightarrow(m,f_\bot)$. As in \cite{lewjlc2}, we say that a Heyting algebra has the \textit{Disjunction Property} DP if its smallest filter $\{f_\bot\}$ is a prime filter. Note that in a Heyting algebra with DP, the equation $f_\vee(m,m')=f_\top$ is equivalent to the condition that $m=f_\top$ or $m'=f_\top$. A Boolean algebra $\mathcal{B}$ has DP if and only if $\mathcal{B}$ has at most two elements.\\

Our algebraic models are given by certain Heyting algebras with a designated ultrafilter as truth-set. An ultrafilter is defined as a maximal filter (which exists by Zorn's Lemma). Maximal filters reflect the \textit{classical} behavior of the logical connectives in the sense of item (b) of the next Lemma. On the other hand, intuitionistic truth can be represented by the greatest element of a Heyting algebra. In this way, we are able to model and to combine classical and intuitionistic truth within the same Heyting algebra. The following facts are crucial for our semantic modeling.

\begin{lemma}\label{200}
Let $\mathcal{H}$ be a Heyting algebra with universe $H$.\\
(a) $U\subseteq H$ is an ultrafilter iff there is a Heyting algebra homomorphism $h$ from $\mathcal{H}$ to the two-element Boolean algebra $\mathcal{B}$ such that the top element of $\mathcal{B}$ is precisely the image of $U$ under $h$.\\
(b) If $U\subseteq H$ is an ultrafilter, then for all $m,m'\in H$:
\begin{itemize}
\item $f_\vee(m,m')\in U$ iff $m\in U$ or $m'\in U$ (i.e. $U$ is a prime filter)
\item $m\in U$ or $f_\neg(m)\in U$
\item $f_\rightarrow(m,m')\in U$ iff [$m\notin U$ or $m'\in U$] iff $f_\vee(f_\neg(m),m')\in U$.
\end{itemize}
\end{lemma}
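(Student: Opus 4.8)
The plan is to prove the two parts in turn, with part (a) being essentially a reformulation of a standard fact and part (b) a direct consequence of (a).

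For part (a), I would argue both directions. For the forward direction, suppose $U\subseteq H$ is an ultrafilter. The standard construction is to pass to the quotient $\mathcal{H}/{\sim}$, where $m\sim m'$ iff $f_\rightarrow(m,m')\in U$ and $f_\rightarrow(m',m)\in U$; one checks $\sim$ is a congruence on the Heyting algebra, so the quotient is again a Heyting algebra, and the canonical map $h\colon\mathcal{H}\to\mathcal{H}/{\sim}$ is a Heyting homomorphism with $h^{-1}(\{[\,f_\top\,]\})=U$. The point is then to see that $\mathcal{H}/{\sim}$ is the two-element Boolean algebra: since $U$ is \emph{maximal}, the only filter of the quotient properly containing $\{[\,f_\top\,]\}$ is the whole algebra, forcing the quotient to have exactly two elements, hence to be $\mathcal{B}$. (Equivalently, one can invoke the known correspondence between ultrafilters and homomorphisms onto $\mathcal{B}$ for distributive lattices / Heyting algebras and just note that the preimage of the top is the given ultrafilter.) For the converse, if $h\colon\mathcal{H}\to\mathcal{B}$ is a Heyting homomorphism and $U=h^{-1}(\{\top_{\mathcal B}\})$, then $U$ is a filter because $h$ preserves $f_\wedge$, $f_\top$ and order; and $U$ is maximal because any filter strictly larger than $U$ contains some $m$ with $h(m)=\bot_{\mathcal B}$, and then it also contains $f_\neg(m)$ (as $h(f_\neg m)=\top_{\mathcal B}$, so $f_\neg m\in U$), hence it contains $f_\wedge(m,f_\neg m)=f_\bot$ and is all of $H$.

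For part (b), I would simply transport each statement across the homomorphism $h$ from (a): for any $m,m'\in H$ we have $m\in U$ iff $h(m)=\top_{\mathcal B}$, and $h$ commutes with $f_\vee,f_\wedge,f_\rightarrow,f_\neg$. In $\mathcal{B}$ the three claims are trivial: $\top_{\mathcal B}=h(m)\vee_{\mathcal B} h(m')$ iff $h(m)=\top_{\mathcal B}$ or $h(m')=\top_{\mathcal B}$ (primeness); $h(m)=\top_{\mathcal B}$ or $\neg_{\mathcal B}h(m)=\top_{\mathcal B}$ (two-element Boolean algebra); and $h(m)\to_{\mathcal B}h(m')=\top_{\mathcal B}$ iff $h(m)\neq\top_{\mathcal B}$ or $h(m')=\top_{\mathcal B}$ iff $\neg_{\mathcal B}h(m)\vee_{\mathcal B}h(m')=\top_{\mathcal B}$. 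Pulling these back through $h$ gives exactly the three bulleted assertions. Alternatively, part (b) can be proved directly from the definition of a maximal filter without invoking (a): primeness of a maximal filter in a distributive lattice is standard; $m\notin U$ implies, by maximality, that the filter generated by $U\cup\{m\}$ is everything, so $f_\bot$ lies in it, which gives $f_\wedge(u,m)=f_\bot\le f_\neg(m)$... wait, more carefully, there is $u\in U$ with $f_\wedge(u,m)=f_\bot$, hence $u\le f_\rightarrow(m,f_\bot)=f_\neg(m)$, so $f_\neg(m)\in U$; and the implication statement follows since in a Heyting algebra $f_\rightarrow(m,m')\in U$ is equivalent, via the filter order and the adjunction defining $f_\rightarrow$, to the disjunctive condition, and $f_\vee(f_\neg(m),m')\le f_\rightarrow(m,m')$ always while the reverse containment in $U$ uses primeness together with the second bullet.

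The main obstacle is really just bookkeeping: in the forward direction of (a) one must verify that $\sim$ really is a Heyting-algebra congruence and that maximality of $U$ forces the quotient to be two-element rather than merely a Boolean algebra (a quotient by a prime filter would only give the Disjunction Property, not Booleanness). I expect this to be the one place where a little care is needed; everything else is routine transport of structure or standard filter arithmetic, and since the paper is happy to cite standard lattice theory, I would keep the write-up brief and lean on the known ultrafilter/homomorphism correspondence for distributive lattices, specializing it to Heyting algebras.
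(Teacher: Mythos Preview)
Your proposal is correct and matches the paper's approach: the paper simply records that (a) is a well-known property of Heyting algebras and that (b) follows straightforwardly from (a), which is exactly your primary line of argument (the quotient/homomorphism correspondence for (a), then transporting the three claims through $h$ for (b)). Your write-up merely supplies the details the paper elides; the optional direct filter-arithmetic argument you sketch for (b) is a harmless alternative and also sound.
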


\begin{proof}
(a) is a well-known property of Heyting algebras, (b) follows straightforwardly from (a).
\end{proof}

\begin{definition}\label{205}
An $EL3^-$-model is a Heyting algebra 
\begin{equation*}
\mathcal{M}=(M, \mathit{TRUE}, \mathit{BEL}, f_\bot, f_\top, f_\vee, f_\wedge, f_\rightarrow, f_\square, f_K)
\end{equation*}
with a designated ultrafilter $\mathit{TRUE}\subseteq M$, a set $\mathit{BEL}\subseteq M$ and additional unary operations $f_\square$ and $f_K$ such that for all $m,m',m''\in M$ the following truth conditions are fulfilled (as before, $\le$ denotes the lattice order):
\begin{enumerate}
\item $f_\square(f_\vee(m,m'))\le f_\vee(f_\square(m),f_\square(m'))$
\item $f_\square(m)\le m$
\item $f_\square(f_\rightarrow(m,m'))\le f_\square(f_\rightarrow(f_\square(m),f_\square(m')))$
\item $f_\square(m)\in\mathit{TRUE}\Leftrightarrow m=f_\top$
\item $f_K(m)\in\mathit{TRUE}\Leftrightarrow m\in\mathit{BEL}$
\item $f_K(f_\rightarrow(m,m'))\le f_\rightarrow(f_K(m),f_K(m'))$
\item $f_\square(m)\le f_\square(f_K(m))$
\end{enumerate}
An $EL3$-model is an $EL3^-$-model that satisfies the additional truth condition (viii): $f_K(m)\le f_\neg(f_\neg(m))$, for all propositions $m$.

We regard $M$ as a \textit{propositional universe} and $\mathit{TRUE}\subseteq M$ as the set of propositions which are classically true. The propositions $f_\top$, $f_\bot$ represent intuitionistic truth and intuitionistic falsity, respectively. $\mathit{BEL}$ is the set of believed propositions. If $\mathit{BEL}\subseteq \mathit{TRUE}$, then we identify belief with knowledge. 
\end{definition}

Recall that in any Heyting algebra: $m\le m' \Leftrightarrow f_\rightarrow(m,m')=f_\top$. Now observe that the truth conditions (i)--(iii) and (vi)--(viii) of the above definition correspond to applications of rule AN to the axioms (A1)--(A3) and (A6)--(A8), respectively. Truth condition (iv) establishes the relation between intuitionistic and classical truth via the necessity operator. Similarly, truth condition (v) defines the relation between belief and classical truth.\\ 

Note that truth condition (viii) of an $EL3$-model, together with truth condition (v) and Lemma \ref{200}, implies that $\mathit{BEL}\subseteq\mathit{TRUE}$. That is, believed propositions are classically true and belief is knowledge in any $EL3$-model. On the other hand, in any model, the condition $\mathit{BEL}\subseteq\mathit{TRUE}$ is equivalent to the condition $f_\rightarrow(f_K(m),m)\in\mathit{TRUE}$, for all propositions $m$. From the definition of model theoretic satisfaction below it will follow that models with the property $\mathit{BEL}\subseteq\mathit{TRUE}$ are precisely the models satisfying classical reflection $K\varphi\rightarrow\varphi$.

The truth conditions (i) and (iv) ensure that every model has the Disjunction Property DP: for all $m,m'\in M$, $f_\vee(m,m')=f_\top$ iff $m=f_\top$ or $m'=f_\top$.\\ 

\begin{definition}\label{210}
Let $\mathcal{M}$ be an $EL3$-model. We say that 
\begin{itemize}
\item $\mathcal{M}$ is an $EL4$-model if for all $m\in M$: $f_\square(m)\le f_\square(f_\square(m))$\footnote{Note that this, together with (iv), implies the weaker condition (4'): $f_\square(m)= f_\top\Leftrightarrow m= f_\top$. In a discussion on the last pages of our article \cite{lewjlc1}, there is a minor incorrectness which, however, has no impact on main results and is easily corrected as follows. It is claimed there correctly that condition (4') ensures soundness of $S4$-axiom $\square\varphi\rightarrow\square\square\varphi$. However, to warrant also a sound application of AN, i.e. soundness of $\square(\square\varphi\rightarrow\square\square\varphi)$, one has to impose the stronger truth condition $f_\square(m)\le f_\square(f_\square(m))$ on algebraic $S4$-models. Unfortunately, this detail was overlooked in Corollary 5.7 of \cite{lewjlc1}.} 
\item $\mathcal{M}$ is an $EL5$-model if for all $m\in M$ the following holds:
\begin{equation*}
\begin{split}
f_\square(m)=
\begin{cases}
f_\top, \text{ if }m=f_\top\\
f_\bot, \text{ else}
\end{cases}
\end{split}
\end{equation*}
\end{itemize}
\end{definition}

Note that the defining condition of $EL5$-models implies $f_\square(m)\le f_\square(f_\square(m))$ and $f_\neg(f_\square(m))\le f_\square(f_\neg(f_\square(m)))$, for all propositions $m$. In particular, every $EL5$-model is an $EL4$-model.

\begin{definition}\label{220}
Let $\mathcal{L}\in\{EL3^-,EL3,EL4,EL5\}$. An assignment of an $\mathcal{L}$-model $\mathcal{M}$ is a function $\gamma\colon V\rightarrow M$ that extends in the canonical way to a function $\gamma\colon Fm\rightarrow M$: $\gamma(\bot)=f_\bot$, $\gamma(\top)=f_\top$, $\gamma(\square\varphi)=f_\square(\gamma(\varphi))$, $\gamma(K\varphi)=f_K(\gamma(\varphi))$, $\gamma(\varphi*\psi)=f_*(\gamma(\varphi),\gamma(\psi))$, for $*\in\{\vee,\wedge,\rightarrow\}$. An $\mathcal{L}$-interpretation is a tuple $(\mathcal{M},\gamma)$ consisting of an $\mathcal{L}$-model and an assignment. The relation of satisfaction is defined by 
\begin{equation*}
(\mathcal{M},\gamma)\vDash\varphi :\Leftrightarrow \gamma(\varphi)\in\mathit{TRUE}
\end{equation*}
and extends in the usual way to sets of formulas. The relation of logical consequence is defined by $\Phi\Vdash_\mathcal{L}\varphi :\Leftrightarrow$ $(\mathcal{M},\gamma)\vDash\Phi$ implies $(\mathcal{M},\gamma)\vDash\varphi$, for every $\mathcal{L}$-interpretation $(\mathcal{M},\gamma)$.
\end{definition}

The following is not hard to prove (see, e.g., \cite{lewjlc1}): 

\begin{lemma}\label{222}
\begin{equation*}
(\mathcal{M},\gamma)\vDash\varphi\equiv\psi\Leftrightarrow\gamma(\varphi)=\gamma(\psi).
\end{equation*}
\end{lemma}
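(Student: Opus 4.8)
The plan is to unwind the definitions of $\equiv$ and of satisfaction, and then use three standard facts in sequence: that $\mathit{TRUE}$ is a filter, truth condition (iv) of Definition \ref{205}, and the characterization of the lattice order in a Heyting algebra recalled just before Definition \ref{210}.

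First I would note that, by the abbreviation $\varphi\equiv\psi := \square(\varphi\rightarrow\psi)\wedge\square(\psi\rightarrow\varphi)$ and the definition of an assignment,
\begin{equation*}
\gamma(\varphi\equiv\psi)=f_\wedge\bigl(f_\square(f_\rightarrow(\gamma(\varphi),\gamma(\psi))),\,f_\square(f_\rightarrow(\gamma(\psi),\gamma(\varphi)))\bigr).
\end{equation*}
Since $\mathit{TRUE}$ is an ultrafilter, hence a filter, it contains $f_\wedge(a,b)$ if and only if it contains both $a$ and $b$. Therefore $(\mathcal{M},\gamma)\vDash\varphi\equiv\psi$, i.e. $\gamma(\varphi\equiv\psi)\in\mathit{TRUE}$, is equivalent to the conjunction $f_\square(f_\rightarrow(\gamma(\varphi),\gamma(\psi)))\in\mathit{TRUE}$ and $f_\square(f_\rightarrow(\gamma(\psi),\gamma(\varphi)))\in\mathit{TRUE}$.

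Next I would apply truth condition (iv) of Definition \ref{205}, which says $f_\square(m)\in\mathit{TRUE}\Leftrightarrow m=f_\top$, to each conjunct; this turns the condition into $f_\rightarrow(\gamma(\varphi),\gamma(\psi))=f_\top$ and $f_\rightarrow(\gamma(\psi),\gamma(\varphi))=f_\top$. Finally, using the Heyting-algebra identity $m\le m'\Leftrightarrow f_\rightarrow(m,m')=f_\top$, this is exactly $\gamma(\varphi)\le\gamma(\psi)$ and $\gamma(\psi)\le\gamma(\varphi)$, which by antisymmetry of the lattice order means $\gamma(\varphi)=\gamma(\psi)$. Chaining these equivalences gives the claim.

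There is no real obstacle here: the proof is a routine chain of equivalences, and the only point to be slightly careful about is invoking the correct properties of $\mathit{TRUE}$ (closure under meets in one direction and the filter property in the other) and citing truth condition (iv) rather than the weaker condition (4') mentioned in Definition \ref{210}. I would therefore keep the write-up to a few lines, presenting the four equivalences in order.
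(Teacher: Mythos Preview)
Your argument is correct and is exactly the standard chain of equivalences one expects here; the paper itself does not spell out a proof but merely remarks that the lemma ``is not hard to prove'' and refers to \cite{lewjlc1}, so your write-up is in fact more detailed than what appears in the text. The only cosmetic remark is that the parenthetical about condition (4') is unnecessary, since (iv) is part of the definition of every model and there is no risk of confusion.
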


That is, $\varphi\equiv\psi$ is true iff $\varphi$ and $\psi$ denote the same proposition -- this is precisely the intended meaning of an identity connective. By Lemma \ref{200}, the condition that the proposition $f_\leftrightarrow(m,m'):=f_\wedge (f_\rightarrow(m,m'),f_\rightarrow(m',m))$ belongs to the ultrafilter $\mathit{TRUE}$ of a given model is equivalent to the condition: $m\in\mathit{TRUE} \Leftrightarrow m'\in\mathit{TRUE}$. The latter, however, does not imply $m=m'$. Thus, the Fregean Axiom $(\varphi\leftrightarrow\psi)\rightarrow (\varphi\equiv\psi)$ is not valid. This shows that we are actually dealing with a non-Fregean semantics.\footnote{A counterexample showing that the equivalence $m\in\mathit{TRUE}\Leftrightarrow m'\in\mathit{TRUE}$ does not imply $m=m'$ is given by the model constructed in the proof of Theorem \ref{660} below. The model is based on the Heyting algebra of the closed interval of reals $[0,1]$ with (unique) ultrafilter $\mathit{TRUE}=(0,1]$. If $m\in\mathit{TRUE}$, then $f_\neg(m)=f_\rightarrow(m,0)=0$ and $f_\neg(f_\neg(m))=f_\neg(0)=1$. Furthermore, $m=0$ implies $f_\neg(f_\neg(m))=0$. Thus, $m\in\mathit{TRUE}$ iff $f_\neg(f_\neg(m))\in\mathit{TRUE}$. However, for any $m\in\mathit{TRUE}\smallsetminus\{1\}$ we have $m\neq f_\neg(f_\neg(m))=1$. The model witnesses what is intended by our non-Fregean semantics: intuitionistically equivalent formulas have the same meaning, but classically equivalent formulas, such as $\varphi$ and $\neg\neg\varphi$, may denote different propositions.}   

\section{Soundness, completeness and some consequences}

It is a well-known fact that all intuitionistic theorems evaluate to the top element in every Heyting algebra, under any assignment. Also recall that $f_\rightarrow(m,m')=f_\top \Leftrightarrow m\le m'$ is a general law in Heyting algebras. Then the soundness of axioms (INT) and (A1)--(A3) as well as the soundness of rule AN follows from properties of Heyting algebras along with the truth conditions (i)--(iii) of a model. Theorem scheme (T) of \textit{tertium non datur} is sound because $\mathit{TRUE}$ is an ultrafilter, see Lemma \ref{200} (b). Properties of ultrafilters also ensure soundness of rule MP. Furthermore, one easily checks that the additional truth conditions of an $EL3$-, $EL4$-, $EL5$-model ensure soundness of the axioms (A8), (A4), (A5), respectively -- inclusively soundness of the application of AN to those axioms. We conclude that the logics $EL3^-, EL3, EL4, EL5$ are sound with respect to the corresponding classes of models: $\Phi\vdash_\mathcal{L}\varphi$ implies $\Phi\Vdash_\mathcal{L}\varphi$, for $\mathcal{L}\in\{EL3^-, EL3, EL4, EL5\}$.\\

In the following, we consider $EL5$ as the underlying deductive system. The notions of (in)consistent and maximal consistent set of formulas are defined as usual. By standard arguments, any consistent set extends to a maximal consistent set of formulas. We show that every maximal consistent set $\Phi$ is satisfiable, i.e. there is an $EL5$-model $\mathcal{M}$ and an assignment $\gamma\colon V\rightarrow M$ such that $(\mathcal{M},\gamma)\vDash\Phi$. For a maximal consistent set $\Phi$, we define the relation $\approx_\Phi$ on $Fm$ by 
\begin{equation*}
\varphi\approx_\Phi\psi :\Leftrightarrow\Phi\vdash_{EL5}\varphi\equiv\psi.
\end{equation*} 

\begin{lemma}\label{620}
Let $\Phi\subseteq Fm$ be a maximal consistent set. The relation $\approx_\Phi$ is an equivalence relation on $Fm$ with the following properties:
\begin{itemize}
\item If $\varphi_1\approx_\Phi\psi_1$ and $\varphi_2\approx_\Phi\psi_2$, then $\neg\varphi_1\approx_\Phi\neg\psi_1$, $\square\varphi_1\approx_\Phi\square\psi_1$, $K\varphi_1\approx_\Phi K\psi_1$ and $(\varphi_1\rightarrow\varphi_2)\approx_\Phi(\psi_1\rightarrow\psi_2)$.
\item If $\varphi\approx_\Phi\psi$, then $\varphi\in\Phi\Leftrightarrow\psi\in\Phi$.
\end{itemize}
\end{lemma}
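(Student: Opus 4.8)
The plan is to verify each claimed closure property of $\approx_\Phi$ by reducing it to the corresponding instance of scheme SP (Lemma \ref{100}) together with the fact that $\Phi$ is closed under $\vdash_{EL5}$-derivability. Reflexivity, symmetry and transitivity of $\approx_\Phi$ follow immediately from the theorem schemes $\varphi\equiv\varphi$ (an instance of the IPC theorem $\varphi\to\varphi$ prefixed and conjoined appropriately, or directly (Id1)), from the symmetry built into the definition $\varphi\equiv\psi := \square(\varphi\to\psi)\wedge\square(\psi\to\varphi)$, and from transitivity of strict implication, which is available since (a) of the proof of Theorem \ref{160} gives $\square(\varphi\to\psi)\to(\square(\psi\to\chi)\to\square(\varphi\to\chi))$ as a theorem; in each case one uses that $\Phi$ is deductively closed.

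For the congruence properties, I would argue as follows. Given $\varphi_1\approx_\Phi\psi_1$, i.e.\ $\Phi\vdash_{EL5}\varphi_1\equiv\psi_1$, apply the instance of SP from Lemma \ref{100} with $\chi$ chosen to realise the relevant context: $\chi = \neg x$ for the negation claim (equivalently use that $\neg\varphi:=\varphi\to\bot$ and the binary case with the second argument fixed to $\bot$), $\chi=\square x$ for the $\square$-claim, $\chi = Kx$ for the $K$-claim. Each gives $\Phi\vdash_{EL5}(\text{context on }\varphi_1)\equiv(\text{context on }\psi_1)$, i.e.\ the desired $\approx_\Phi$-relation. For the binary connective $(\varphi_1\to\varphi_2)\approx_\Phi(\psi_1\to\psi_2)$, from the two hypotheses one first gets $(\varphi_1\to\varphi_2)\approx_\Phi(\psi_1\to\varphi_2)$ (apply SP with $\chi = x\to\varphi_2$) and $(\psi_1\to\varphi_2)\approx_\Phi(\psi_1\to\psi_2)$ (apply SP with $\chi=\psi_1\to x$), then chain them using transitivity of $\approx_\Phi$ established above. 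Alternatively, item (a) in the proof of Lemma \ref{100} handles the binary case directly.

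For the last bullet, suppose $\varphi\approx_\Phi\psi$, so $\Phi\vdash_{EL5}\square(\varphi\to\psi)\wedge\square(\psi\to\varphi)$; by (A2) and MP this yields $\Phi\vdash_{EL5}\varphi\to\psi$ and $\Phi\vdash_{EL5}\psi\to\varphi$. Hence if $\varphi\in\Phi$ then, since $\Phi$ is deductively closed, $\psi\in\Phi$, and symmetrically; this proves $\varphi\in\Phi\Leftrightarrow\psi\in\Phi$.

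The routine parts are the appeals to deductive closure of $\Phi$ and the instantiation of SP; the only point needing care is that Lemma \ref{100} is stated for $EL3^-$, whereas here we work in $EL5\supseteq EL3^-$, so every $EL3^-$-theorem is an $EL5$-theorem and the lemma applies verbatim. I do not anticipate a genuine obstacle: the content of this lemma is essentially a repackaging of SP as a statement about the congruence induced on $Fm$ by a maximal consistent set, which is exactly the bridge needed to define the Lindenbaum–Tarski-style quotient algebra in the completeness proof that follows.
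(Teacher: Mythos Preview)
Your proposal is correct and follows essentially the same approach as the paper: the first item is reduced to SP (Lemma \ref{100}), and the second item is obtained from $\varphi\approx_\Phi\psi \Rightarrow \Phi\vdash_{EL5}\varphi\leftrightarrow\psi$ via (A2) and deductive closure. The paper's proof is simply the two-line summary of exactly the argument you spell out in detail.
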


\begin{proof}
The first item follows from the Substitution Property SP (see Lemma \ref{100}), the second item follows from the fact that $\varphi\approx_\Phi\psi$ implies $\Phi\vdash_{EL5}\varphi\leftrightarrow\psi$. 
\end{proof}

\begin{lemma}\label{630}
If $\Psi\subseteq Fm$ is consistent, then there is an $EL5$-interpretation $(\mathcal{M},\gamma)$ such that $(\mathcal{M},\gamma)\vDash\Psi$.
\end{lemma}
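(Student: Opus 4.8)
The plan is to construct the canonical model from a maximal consistent extension $\Phi\supseteq\Psi$ (which exists by the standard Lindenbaum argument, using that $\Psi$ is consistent) and to take as propositional universe the quotient $M := Fm/\!\approx_\Phi$. By Lemma \ref{620}, $\approx_\Phi$ is a congruence for $\neg,\square,K,\rightarrow$, and one checks in the same way that it respects $\vee$ and $\wedge$ (apply rule AN to the relevant IPC theorems, exactly as in the proof of Lemma \ref{100}). So the operations $f_\bot := [\bot]$, $f_\top := [\top]$, $f_*([\varphi],[\psi]) := [\varphi * \psi]$ for $* \in \{\vee,\wedge,\rightarrow\}$, $f_\square([\varphi]) := [\square\varphi]$, $f_K([\varphi]) := [K\varphi]$ are well defined. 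I would then set $\mathit{TRUE} := \{[\varphi] \mid \varphi \in \Phi\}$ — well defined by the second item of Lemma \ref{620} — and $\mathit{BEL} := \{[\varphi] \mid K\varphi \in \Phi\}$. Finally $\gamma(x) := [x]$ for variables $x$; an easy induction on formula complexity gives $\gamma(\varphi) = [\varphi]$ for all $\varphi$, whence $(\mathcal{M},\gamma)\vDash\varphi \Leftrightarrow \gamma(\varphi)\in\mathit{TRUE} \Leftrightarrow \varphi\in\Phi$, and in particular $(\mathcal{M},\gamma)\vDash\Psi$ as desired.

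The substantive work is verifying that $\mathcal{M}$ really is an $EL5$-model, i.e. that it is a Heyting algebra with DP and that truth conditions (i)--(viii) plus the $EL5$-condition on $f_\square$ all hold. That $M$ is a bounded lattice under $f_\bot,f_\top,f_\vee,f_\wedge$ with $[\varphi]\le[\psi] \Leftrightarrow \Phi\vdash\varphi\rightarrow\psi$ follows from the IPC fragment (INT) together with Lemma \ref{222}-style reasoning: $[\varphi]\le[\psi]$ iff $[\varphi\wedge\psi]=[\varphi]$ iff $\Phi\vdash(\varphi\wedge\psi)\equiv\varphi$, and since $\Phi$ is maximal consistent and closed under MP one shows this is equivalent to $\Phi\vdash\varphi\rightarrow\psi$; that $f_\rightarrow$ is genuinely the relative pseudo-complement is again the deduction-theorem-like behaviour of IPC implication. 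The Disjunction Property needs $\bot$ to be prime in the quotient: if $[\varphi\vee\psi]=f_\bot$ then $\Phi\vdash(\varphi\vee\psi)\equiv\bot$, so $\Phi\vdash\square(\varphi\vee\psi)\rightarrow\square\bot$; using (A1), $\square\bot\rightarrow\bot$ (an instance of (A2)), the fact that $\square\chi \leftrightarrow (\chi\equiv\top)$ (Lemma \ref{110}) and that $\Phi$ is deductively well-behaved, one forces $[\varphi]=f_\top$ or $[\psi]=f_\top$. Conditions (i)--(iii) and (vi)--(viii) translate directly: each is $f_\square(\dots)=f_\top$ or $f_K(\dots)\le\dots$, which unwinds via $\le$ to $\Phi$ proving the corresponding necessitated axiom (A1)--(A3), (A6)--(A8), available either as an axiom or by rule AN. Condition (iv), $f_\square([\varphi])\in\mathit{TRUE} \Leftrightarrow [\varphi]=f_\top$, is: $\square\varphi\in\Phi \Leftrightarrow \Phi\vdash\varphi\equiv\top$, which is exactly Lemma \ref{110} together with maximal consistency. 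Condition (v), $f_K([\varphi])\in\mathit{TRUE}\Leftrightarrow[\varphi]\in\mathit{BEL}$, holds by the very definition of $\mathit{BEL}$ (using $K\varphi\in\Phi \Leftrightarrow K\varphi\in\Phi$). And $\mathit{TRUE}$ is an ultrafilter because $\Phi$ is maximal consistent and (T) is available: $[\varphi]\in\mathit{TRUE}$ or $[\neg\varphi]\in\mathit{TRUE}$, with the filter and primeness properties coming from MP-closure and the IPC fragment.

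The main obstacle I expect is the $EL5$-specific clause, that $f_\square([\varphi]) = f_\top$ if $[\varphi]=f_\top$ and $f_\square([\varphi])=f_\bot$ otherwise — equivalently, in $\Phi$: $\neg\square\varphi \in \Phi$ whenever $\varphi\notin\Phi$-modulo-$\approx_\Phi$-top, and moreover $\square\varphi\equiv\bot$ in that case, not merely $\neg\square\varphi$. This is where axiom (A5) together with AN does the real work: from $\neg\square\varphi$ one gets $\square\neg\square\varphi$, and then one must argue, using (A4)/(A5) and the S3 machinery for $\equiv$ (Lemma \ref{100}), that $\square\varphi$ is provably strictly equivalent to $\bot$ in $\Phi$ — i.e. that the modal operator collapses to the two-valued $\{f_\bot,f_\top\}$ behaviour on the whole quotient. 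One has to be careful that this argument is uniform across all $\varphi$ and genuinely uses the maximality of $\Phi$ (to decide $\square\varphi$) rather than just consistency. Once that collapse is established, the remaining $EL4$/$EL5$ algebraic inequalities ($f_\square(m)\le f_\square(f_\square(m))$ and $f_\neg(f_\square(m))\le f_\square(f_\neg(f_\square(m)))$) are immediate. The rest of the verification is routine bookkeeping of the kind already carried out in \cite{lewjlc1,lewjlc2}.
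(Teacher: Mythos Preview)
Your proposal follows the same canonical-model construction as the paper: extend $\Psi$ to a maximal consistent $\Phi$, form the quotient $Fm/\!\approx_\Phi$, define $\mathit{TRUE}$, $\mathit{BEL}$ and the operations exactly as you do, and verify the truth conditions. Two slips are worth correcting, though neither is fatal.

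First, the lattice order in the quotient is $[\varphi]\le[\psi]\Leftrightarrow\Phi\vdash\square(\varphi\rightarrow\psi)$, not $\Phi\vdash\varphi\rightarrow\psi$; these genuinely differ (e.g.\ $\top\rightarrow(x\vee\neg x)$ is derivable by (T), while $\square(\top\rightarrow(x\vee\neg x))$ in general is not). Your later uses of $\le$ for conditions (i)--(iii), (vi)--(viii) still go through, because there you correctly invoke the \emph{necessitated} axioms via AN. Second, your Disjunction Property paragraph is garbled: you start from $[\varphi\vee\psi]=f_\bot$ and conclude one disjunct equals $f_\top$, which is backwards. But DP is not a separate requirement in Definition~\ref{205}; as the paper notes, it is a consequence of conditions (i) and (iv), so this paragraph can simply be dropped.

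For the $EL5$ clause the paper's argument is slightly cleaner than your sketch: having already established truth condition (iv), from $f_\square(\overline{\varphi})\in\mathit{TRUE}$ axiom (A4) gives $f_\square(f_\square(\overline{\varphi}))\in\mathit{TRUE}$, whence (iv) yields $f_\square(\overline{\varphi})=f_\top$; and from $f_\square(\overline{\varphi})\notin\mathit{TRUE}$, maximality of $\Phi$ gives $f_\neg(f_\square(\overline{\varphi}))\in\mathit{TRUE}$, then (A5) gives $f_\square(f_\neg(f_\square(\overline{\varphi})))\in\mathit{TRUE}$, so by (iv) $f_\neg(f_\square(\overline{\varphi}))=f_\top$, and in any Heyting algebra this forces $f_\square(\overline{\varphi})=f_\bot$. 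This is exactly the collapse you anticipated, obtained without any further manipulation of $\equiv$.
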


\begin{proof}
Let $\Psi$ be consistent. By Zorn's Lemma, $\Psi$ is contained in a maximal consistent set $\Phi$. For $\varphi\in Fm$, let $\overline{\varphi}$ be the equivalence class of $\varphi$ modulo $\approx_\Phi$. We define:
\begin{itemize}
\item $M=\{\overline{\varphi}\mid\varphi\in Fm\}$ 
\item $\mathit{TRUE}=\{\overline{\varphi}\mid\varphi\in \Phi\}$
\item $\mathit{BEL}=\{\overline{\varphi}\mid K\varphi\in \Phi\}$
\item functions $f_\top, f_\bot$, $f_\square$, $f_*$, where $*\in\{\vee,\wedge,\rightarrow\}$, by $f_\top=\overline{\top}$, $f_\bot=\overline{\bot}$, $f_\square(\overline{\varphi})=\overline{\square\varphi}$, $f_K(\overline{\varphi})=\overline{K\varphi}$, $f_*(\overline{\varphi},\overline{\psi})=\overline{\varphi * \psi}$, respectively.
\end{itemize}
In the same way as in the proof of the corresponding Lemma 4.2 of \cite{lewjlc2}, one shows that $\mathcal{M}$ is a Heyting algebra with ultrafilter $\mathit{TRUE}$ such that the truth conditions of a model hold: $\mathcal{M}$ is a Heyting algebra because all intuitionistic theorems are contained in $\Phi$, in particular those of the form $\varphi\leftrightarrow\psi$. Rule AN then yields $\Phi\vdash_{EL5}\varphi\equiv\psi$. Thus, $\mathcal{M}$ satisfies the equations which axiomatize the class of Heyting algebras. By Lemma \ref{200}, $\mathit{TRUE}$ is an ultrafilter. Truth condition (iv) of Definition \ref{205} is warranted by Lemma \ref{110}. Truth condition (v) holds by definition of the set $\mathit{BEL}$. The remaining truth conditions of a model follow from the corresponding axioms, all contained in $\Phi$, along with applications of rule AN. We show that the defined model is an $EL5$-model, i.e. we verify the second condition of Definition \ref{210}. It is enough to check that $f_\square(\overline{\varphi})\in\mathit{TRUE}$ implies $f_\square(\overline{\varphi})=f_\top$, and $f_\square(\overline{\varphi})\notin\mathit{TRUE}$ implies $f_\square(\overline{\varphi})=f_\bot$. By axiom (A4) and truth condition (iv) of a model, we get the following implications: $f_\square(\overline{\varphi})\in\mathit{TRUE}$ $\Rightarrow$ $f_\square(f_\square(\overline{\varphi}))\in\mathit{TRUE}$ $\Rightarrow$ $f_\square(\overline{\varphi})=f_\top$. By axiom (A5) and truth condition (iv) of a model, we obtain the following implications: $f_\square(\overline{\varphi})\notin\mathit{TRUE}$ $\Rightarrow$ $f_\neg(f_\square(\overline{\varphi}))\in\mathit{TRUE}$ $\Rightarrow$ $f_\square (f_\neg(f_\square(\overline{\varphi})))\in\mathit{TRUE}$ $\Rightarrow$ $f_\neg(f_\square(\overline{\varphi}))=f_\top$ $\Rightarrow$ $f_\square(\overline{\varphi}))=f_\bot$. We have proved that $\mathcal{M}$ is an $EL5$-model. Finally, we define the assignment $\gamma\colon V\rightarrow M$ by $x\mapsto\overline{x}$. Then it follows by induction on the complexity of formulas that $\gamma(\varphi)=\overline{\varphi}$, for any formula $\varphi$. Hence,
\begin{equation*} 
\varphi\in\Phi\Leftrightarrow\overline{\varphi}\in \mathit{TRUE}\Leftrightarrow\gamma(\varphi)\in \mathit{TRUE}\Leftrightarrow (\mathcal{M},\gamma)\vDash\varphi.
\end{equation*}
\end{proof}

By CPC, $\Phi\nvdash_{EL5}\varphi$ implies that $\Phi\cup\{\neg\varphi\}$ is consistent in $EL5$. This yields the Completeness Theorem.

\begin{corollary}\label{650}
Let $\mathcal{L}\in\{EL3^-, EL3, EL4, EL5\}$ and $\Phi\cup\{\varphi\}\subseteq Fm$. Then $\Phi\Vdash_\mathcal{L}\varphi$ if and only if $\Phi\vdash_\mathcal{L}\varphi$.
\end{corollary}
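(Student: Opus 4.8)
The plan is to pair the soundness statement established above with the canonical-model construction of Lemma \ref{630}, handling the nontrivial direction by contraposition. Soundness already gives $\Phi\vdash_\mathcal{L}\varphi\Rightarrow\Phi\Vdash_\mathcal{L}\varphi$ for every $\mathcal{L}\in\{EL3^-,EL3,EL4,EL5\}$, so only completeness, $\Phi\Vdash_\mathcal{L}\varphi\Rightarrow\Phi\vdash_\mathcal{L}\varphi$, is left. I would assume $\Phi\nvdash_\mathcal{L}\varphi$ and produce an $\mathcal{L}$-interpretation refuting $\Phi\Vdash_\mathcal{L}\varphi$. Since each system contains (INT) and the scheme (T), it proves all instances of classical propositional tautologies, so the standard classical argument (the one quoted just before the statement) shows that $\Phi\cup\{\neg\varphi\}$ is $\mathcal{L}$-consistent.

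Next I would feed $\Phi\cup\{\neg\varphi\}$ into the Lindenbaum construction. Extend it to a maximal $\mathcal{L}$-consistent set $\Phi'$; by Lemma \ref{620} the relation $\approx_{\Phi'}$ is a congruence on the full epistemic language (this is where Lemma \ref{100}/SP is used), so the quotient $\mathcal{M}$ of $\approx_{\Phi'}$-classes with $\mathit{TRUE}=\{\overline\psi\mid\psi\in\Phi'\}$, $\mathit{BEL}=\{\overline\psi\mid K\psi\in\Phi'\}$, and operations $f_*(\overline\psi,\overline\chi)=\overline{\psi*\chi}$, $f_\square(\overline\psi)=\overline{\square\psi}$, $f_K(\overline\psi)=\overline{K\psi}$ is well defined. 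As in Lemma \ref{630}, $\mathcal{M}$ is a Heyting algebra (all intuitionistic theorems, hence all $\varphi\leftrightarrow\psi$ axiomatizing Heyting algebras, lie in $\Phi'$, and AN yields the corresponding identities), $\mathit{TRUE}$ is an ultrafilter by Lemma \ref{200}, truth condition (iv) holds by Lemma \ref{110}, (v) holds by definition of $\mathit{BEL}$, and conditions (i)--(iii),(vi),(vii) follow from axioms (A1)--(A3),(A6),(A7) together with AN. For $EL3$ and $EL4$ one additionally gets (viii) from (A8); for $EL4$ one checks $f_\square(\overline\psi)\le f_\square(f_\square(\overline\psi))$ from AN applied to (A4), i.e. from $\square(\square\psi\rightarrow\square\square\psi)\in\Phi'$ together with (iv) — it is exactly this boxed form, rather than (A4) itself, that is needed, as the footnote to Definition \ref{210} notes; and for $EL5$ one argues as in Lemma \ref{630} using (A4) and (A5). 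Thus $\mathcal{M}$ is a genuine $\mathcal{L}$-model in each case.

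Then, putting $\gamma(x)=\overline x$, a routine induction on $\psi$ (with $\gamma(\square\psi)=f_\square(\gamma(\psi))$ and $\gamma(K\psi)=f_K(\gamma(\psi))$ handled by the definitions of $f_\square,f_K$) gives $\gamma(\psi)=\overline\psi$ for all $\psi$, hence $(\mathcal{M},\gamma)\vDash\psi\Leftrightarrow\psi\in\Phi'$. Since $\Phi\cup\{\neg\varphi\}\subseteq\Phi'$, we get $(\mathcal{M},\gamma)\vDash\Phi$ and $\gamma(\neg\varphi)=f_\neg(\gamma(\varphi))\in\mathit{TRUE}$. Because $\mathit{TRUE}$ is an ultrafilter, it is a proper filter with $f_\bot\notin\mathit{TRUE}$, so Lemma \ref{200}(b) gives $f_\neg(m)\in\mathit{TRUE}\Leftrightarrow m\notin\mathit{TRUE}$; hence $\gamma(\varphi)\notin\mathit{TRUE}$, i.e. $(\mathcal{M},\gamma)\nvDash\varphi$. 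Since $(\mathcal{M},\gamma)$ is an $\mathcal{L}$-interpretation, this witnesses $\Phi\nVdash_\mathcal{L}\varphi$, establishing the contrapositive.

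I expect no deep obstacle here: the genuinely delicate ingredients — that strict equivalence is a congruence on the extended language (Lemma \ref{100}), that $\square$ detects the top element (Lemma \ref{110}), and the full canonical-model verification for $EL5$ (Lemma \ref{630}) — are already in place. The only thing specific to this corollary is observing that the construction is uniform across the four systems, the sole differences being the extra truth conditions (viii), the $EL4$ condition and the $EL5$ condition, each matched by the corresponding extra axiom; the one point to be handled with care is using the AN-prefixed form of (A4) (and of (A7), (A8)), so that the truth conditions, which speak of the lattice order rather than merely of membership in $\mathit{TRUE}$, are actually met.
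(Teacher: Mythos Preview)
Your proposal is correct and follows the same route as the paper: soundness for the easy direction, and for completeness the contrapositive via $\Phi\cup\{\neg\varphi\}$ consistent, Lindenbaum extension, and the canonical quotient model of Lemma \ref{630}. You are in fact more explicit than the paper about the uniformity across the four systems (the paper only spells out the $EL5$ case in Lemma \ref{630} and leaves the weaker systems implicit), and your remark about needing the AN-prefixed forms of the extra axioms to verify the order-theoretic truth conditions is exactly the right point of care.
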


In \cite{artpro}, the principle of intuitionistic reflection ``given a proof of $K\varphi$, one can construct a proof of $\varphi$" is rejected. In fact, it is shown [\cite{artpro}, Theorem 5] that $K\varphi\rightarrow\varphi$ is not derivable in Intuitionistic Epistemic Logic. In the following, we prove a corresponding result for our modal logics.

\begin{theorem}\label{660}
$\nvdash_{EL5}\square K\varphi\rightarrow\square\varphi$.
\end{theorem}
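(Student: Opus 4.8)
The plan is to prove $\nvdash_{EL5}\square K\varphi\rightarrow\square\varphi$ by exhibiting a concrete $EL5$-model and an assignment under which $\square K\varphi\rightarrow\square\varphi$ fails, and then invoke soundness (or equivalently the Completeness Theorem, Corollary \ref{650}). Since the footnote to Lemma \ref{222} already announces the model, I would take the Heyting algebra on the real interval $[0,1]$ with the usual order: $f_\vee=\max$, $f_\wedge=\min$, $f_\bot=0$, $f_\top=1$, and $f_\rightarrow(m,m')=1$ if $m\le m'$ and $f_\rightarrow(m,m')=m'$ otherwise; hence $f_\neg(m)=1$ if $m=0$ and $f_\neg(m)=0$ otherwise. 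The unique ultrafilter is $\mathit{TRUE}=(0,1]$, as all proper filters containing a nonzero element must contain everything above it and maximality forces $(0,1]$. To make this an $EL5$-model I must define $f_\square$ as required by Definition \ref{210}: $f_\square(1)=1$ and $f_\square(m)=0$ for $m\neq 1$. I then need to define $f_K$ so that all $EL3$-truth-conditions hold; the natural choice, matching what a classical knowledge operator with $\mathit{BEL}=\mathit{TRUE}$ would give, is $f_K(m)=1$ if $m\in\mathit{TRUE}=(0,1]$ and $f_K(m)=0$ if $m=0$, i.e.\ $f_K = f_\neg\circ f_\neg$.

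The bulk of the routine work is checking the truth conditions (i)--(viii) of Definition \ref{205} together with the $EL5$-condition. Conditions (i)--(iii) are the standard facts that $f_\square$ on this chain validates the S3 axioms; (iv) holds since $f_\square(m)\in(0,1]\iff f_\square(m)=1\iff m=1$; with $\mathit{BEL}=\mathit{TRUE}$, (v) is $f_K(m)\in\mathit{TRUE}\iff m\in\mathit{TRUE}$, which is exactly how I defined $f_K$. For (vi), distribution of $K$: since $f_K$ only takes values $0,1$ and $f_K(m)=1$ unless $m=0$, the inequality $f_K(f_\rightarrow(m,m'))\le f_\rightarrow(f_K(m),f_K(m'))$ reduces to: if the left side is $1$ then $f_\rightarrow(f_K(m),f_K(m'))=1$; the only way the right side could be $0$ is $f_K(m)=1,f_K(m')=0$, i.e.\ $m\neq0,m'=0$, but then $f_\rightarrow(m,m')=m'=0$ wait $m'=0$ so $f_\rightarrow(m,0)=f_\neg(m)=0$ and $f_K(0)=0$, so the left side is $0$ too --- consistent. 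For (vii), co-reflection $f_\square(m)\le f_\square(f_K(m))$: if $f_\square(m)=1$ then $m=1$, so $f_K(1)=1$ and $f_\square(1)=1$; otherwise the left side is $0$. For (viii), $f_K(m)\le f_\neg(f_\neg(m))$ holds with equality by our choice of $f_K$. Finally the $EL5$-clause on $f_\square$ holds by definition.

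With the model verified, the last step is the witness: take $\varphi$ to be a single variable $x$ and the assignment $\gamma$ with $\gamma(x)=m_0$ for some $m_0\in(0,1)$, say $m_0=1/2$. Then $\gamma(K x)=f_K(1/2)=1$, so $\gamma(\square K x)=f_\square(1)=1$, whereas $\gamma(\square x)=f_\square(1/2)=0$. Hence $\gamma(\square K x\rightarrow\square x)=f_\rightarrow(1,0)=0\notin\mathit{TRUE}$, so $(\mathcal{M},\gamma)\not\vDash\square K x\rightarrow\square x$. Since $\mathcal{M}$ is an $EL5$-model, soundness gives $\nvdash_{EL5}\square K\varphi\rightarrow\square\varphi$. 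I expect no genuine obstacle here; the only place demanding care is the systematic verification that $f_K$ as defined, together with $f_\square$ on the chain $[0,1]$, satisfies \emph{all} truth conditions simultaneously --- in particular one should double-check that conditions (vi) and (vii) interact correctly with the implication operation on $[0,1]$, since that is where an ill-chosen $f_K$ would break down. An alternative, slightly cleaner route would be to use the smallest such model, a finite chain (e.g.\ the three-element chain $0<a<1$ with $\mathit{TRUE}=\{a,1\}$), which carries exactly the same $f_\square$ and $f_K$ and the same witness $\gamma(x)=a$; I would present the $[0,1]$ version to match the footnote already in the paper.
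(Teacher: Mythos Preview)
Your proof is correct and follows essentially the same route as the paper: the same linear Heyting algebra on $[0,1]$ with $\mathit{TRUE}=(0,1]$, the same $f_\square$ (forced by the $EL5$-clause), and a characteristic-function-style $f_K$ determined by $\mathit{BEL}$. The one substantive difference is your choice $\mathit{BEL}=\mathit{TRUE}=(0,1]$, which gives the pleasant identity $f_K=f_\neg\circ f_\neg$, whereas the paper takes $\mathit{BEL}=[b,1]$ for some fixed $0<b<1$. Your choice is the simplest possible for the present theorem, but note that the paper's smaller $\mathit{BEL}$ is deliberate: the very same model is reused immediately afterwards to refute $K\varphi\vee K\neg\varphi$ (Theorem~\ref{665}), and that refutation requires an $m$ with $0<m<b$ so that neither $m$ nor $f_\neg(m)=0$ lies in $\mathit{BEL}$---something your model with $\mathit{BEL}=(0,1]$ cannot provide, since there every nonzero proposition is believed.
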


\begin{proof}
It is enough to construct an $EL5$-model that satisfies $\square K\varphi$ and $\neg\square\varphi$. Then $\nVdash_{EL5}\square K\varphi\rightarrow\square\varphi$ and the assertion follows from soundness of logic $EL5$. We consider the closed interval $M=[0,1]$ of real numbers from $0$ to $1$ and the semi-open subset $\mathit{TRUE}=(0,1]=M\smallsetminus\{0\}$. One easily checks that the natural order on $M$ induces a Heyting algebra with disjunction property where meet and join are the operations for infimum and supremum, respectively, and implication is given by $f_\rightarrow(m,m')=1$ if $m\le m'$, and $f_\rightarrow(m,m')=m'$ if $m>m'$. Of course, $\mathit{TRUE}$ is an ultrafilter on $M$. Let $b\in M$ be any real with $0< b<1$. Then we put $BEL=[b,1]$. We define operations $f_\square$ and $f_K$ on $M$ as follows: 
\begin{equation*}
\begin{split}
f_\square(m):=
\begin{cases}
1, \text{ if }m=1\\
0, \text{ else}
\end{cases}
\end{split}
\end{equation*}
and
\begin{equation*}
\begin{split}
f_K(m):=
\begin{cases}
1, \text{ if }m\in\mathit{BEL}\\
0, \text{ else.}
\end{cases}
\end{split}
\end{equation*}

Then it is not hard to verify that all truth conditions of an $EL5$-model are fulfilled. We show this in detail only for truth condition (vi): $f_K(f_\rightarrow(m,m'))\le f_\rightarrow( f_K(m),f_K(m'))$, for all reals $m,m' \in [0,1]$. Of course, the condition is fulfilled if $f_K(f_\rightarrow(m,m'))=0$. So we assume $f_K(f_\rightarrow(m,m'))=1$, i.e. $f_\rightarrow(m,m')\in\mathit{BEL}$. Then it is enough to show that $f_K(m)\le f_K(m')$. Again, we may assume $f_K(m)=1$, i.e. $m\in\mathit{BEL}$. If $m\le m'$, then we get $m'\in\mathit{BEL}$ and $f_K(m')=1$. Otherwise, $m' < m$. This implies $m'=f_\rightarrow(m,m')$. Since $f_\rightarrow(m,m')\in\mathit{BEL}$, we conclude $f_K(m')=1$. Thus, truth condition (vi) holds. 

By definition of the set $\mathit{BEL}$, there exists a $m\in BEL\smallsetminus\{1\}$. Then for $x\in V$ and for any assignment $\gamma\colon V\rightarrow M$ with $\gamma(x)=m$, we have $\gamma(\square Kx)=f_\square(\gamma(Kx))=f_\square(f_K(m))=1$, and $\gamma(\square x)=f_\square(m)=0$. Thus, $(\mathcal{M},\gamma)\vDash\square Kx$ and $(\mathcal{M},\gamma)\nvDash\square x$.
\end{proof}

Theorem \ref{660} can also be seen as a \textit{model existence theorem}. In fact, it was not so clear whether there exist (non-trivial) models for our epistemic logics. Note that the model construction presented in the completeness proof does not serve as a \textit{model existence theorem} since it presupposes the consistency of the underlying deductive system. Consistency, in turn, follows from the existence of models ... .

\begin{theorem}\label{665}
$\nvdash_{EL5}K\varphi\vee K\neg\varphi$.
\end{theorem}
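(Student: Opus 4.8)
The plan is to argue purely semantically, in the same spirit as the proof of Theorem~\ref{660}. By soundness of $EL5$ it suffices to exhibit a single $EL5$-interpretation $(\mathcal{M},\gamma)$ and a variable $x$ with $(\mathcal{M},\gamma)\nvDash Kx\vee K\neg x$; then $\nVdash_{EL5}Kx\vee K\neg x$, and hence $\nvdash_{EL5}K\varphi\vee K\neg\varphi$ as a scheme.

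I would simply reuse the $EL5$-model built in the proof of Theorem~\ref{660}: the Heyting algebra on $M=[0,1]$ with $\mathit{TRUE}=(0,1]$, $\mathit{BEL}=[b,1]$ for a fixed $b$ with $0<b<1$, and the operations $f_\square$, $f_K$ defined there. That structure has already been verified to be an $EL5$-model, so nothing new has to be checked about it. The key observation is that all we need is one proposition $m$ with $m\notin\mathit{BEL}$ \emph{and} $f_\neg(m)\notin\mathit{BEL}$: by truth condition (v) this forces $f_K(m)\notin\mathit{TRUE}$ and $f_K(f_\neg(m))\notin\mathit{TRUE}$, and since $\mathit{TRUE}$ is prime (Lemma~\ref{200}(b)) we then get $f_\vee(f_K(m),f_K(f_\neg(m)))\notin\mathit{TRUE}$.

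Concretely, I would pick any real $m$ with $0<m<b$ (possible since $b>0$) and let $\gamma\colon V\rightarrow M$ be any assignment with $\gamma(x)=m$. Then $m\notin[b,1]=\mathit{BEL}$, so $\gamma(Kx)=f_K(m)=0$; moreover $f_\neg(m)=f_\rightarrow(m,0)=0$ because $m>0$, so $f_\neg(m)=0\notin\mathit{BEL}$ as well, whence $\gamma(K\neg x)=f_K(0)=0$. Therefore $\gamma(Kx\vee K\neg x)=f_\vee(0,0)=0\notin\mathit{TRUE}$, i.e. $(\mathcal{M},\gamma)\nvDash Kx\vee K\neg x$, which is what we wanted.

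There is essentially no difficult step; the only thing requiring a moment's care is choosing $m$ \emph{strictly} between $0$ and $b$, since the value $m=0$ would be useless here ($f_\neg(0)=1\in\mathit{BEL}$). If one prefers not to borrow the model of Theorem~\ref{660}, one could instead use the three-element chain $0<b<1$ as $M$, with $\mathit{TRUE}=\{b,1\}$, $\mathit{BEL}=\{1\}$, and the analogous operators $f_\square,f_K$; verifying the $EL5$-conditions is then the only routine computation, and the assignment $\gamma(x)=b$ again refutes $Kx\vee K\neg x$.
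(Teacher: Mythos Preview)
Your proposal is correct and essentially identical to the paper's own proof: the paper also reuses the $[0,1]$-model from Theorem~\ref{660}, picks $m$ with $0<m<b$, observes $m\notin\mathit{BEL}$ and $f_\neg(m)=0\notin\mathit{BEL}$, and concludes $(\mathcal{M},\gamma)\nvDash Kx\vee K\neg x$. Your additional remarks (why $m>0$ matters, the optional three-element alternative) go slightly beyond the paper but are perfectly fine.
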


\begin{proof}
The model constructed in the proof of Theorem \ref{660} is a counterexample: Choose any $m\in M$ with $0<m<b$. Then $m\notin\mathit{BEL}$ and $f_\neg(m)=f_\rightarrow(m,f_\bot)=f_\bot=0\notin\mathit{BEL}$. Hence, for $\gamma(x)=m$, we have $(\mathcal{M},\gamma)\nvDash Kx\vee K\neg x$.
\end{proof}

$K(\varphi\vee\psi)\rightarrow (K\varphi\vee K\psi)$ is not valid in Intuitionistic Epistemic Logic [\cite{artpro}, Theorem 10] neither it holds in our epistemic logics:

\begin{theorem}\label{670}
$\nvdash_{EL5}K(\varphi\vee\psi)\rightarrow (K\varphi\vee K\psi)$
\end{theorem}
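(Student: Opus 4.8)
The plan is to exhibit an $EL5$-model that falsifies the scheme and then invoke soundness of $EL5$ (as in the proofs of Theorems \ref{660} and \ref{665}). Since in any model $f_\rightarrow(m,m')\notin\mathit{TRUE}$ precisely when $m\in\mathit{TRUE}$ and $m'\notin\mathit{TRUE}$, and $\mathit{TRUE}$ is prime (Lemma \ref{200}(b)), it suffices to produce a model $\mathcal{M}$, an assignment $\gamma$ and formulas $\varphi,\psi$ with $\gamma(\varphi),\gamma(\psi)\notin\mathit{BEL}$ but $f_\vee(\gamma(\varphi),\gamma(\psi))\in\mathit{BEL}$: then truth condition (v) gives $\gamma(K(\varphi\vee\psi))\in\mathit{TRUE}$ while $\gamma(K\varphi),\gamma(K\psi)\notin\mathit{TRUE}$, hence $\gamma(K\varphi\vee K\psi)\notin\mathit{TRUE}$. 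The linearly ordered algebra of Theorem \ref{660} cannot achieve this (a join of two elements equals one of them), so a non-chain Heyting algebra is required.

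Concretely, I would take the five-element lattice $\{0,a,b,d,1\}$ obtained by adjoining a new top $1$ above the four-element Boolean algebra $\{0,a,b,d\}$ (so $a,b$ are complementary atoms with $a\vee b=d$, $a\wedge b=0$, and $d<1$). Being a finite distributive lattice it carries a Heyting structure, whose relative pseudo-complements are easily listed: $f_\rightarrow(x,y)=1$ if $x\le y$, $f_\rightarrow(1,y)=y$, and on the Boolean part $f_\rightarrow$ agrees with Boolean implication, so in particular $f_\neg(a)=b$, $f_\neg(b)=a$, $f_\neg(d)=0$, whence $f_\neg(f_\neg(d))=1$. Its maximal filters are $\{a,d,1\}$ and $\{b,d,1\}$, and I set $\mathit{TRUE}=\{a,d,1\}$. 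For the epistemic data I put $\mathit{BEL}=\{d,1\}$, define $f_K(m)=1$ if $m\in\mathit{BEL}$ and $f_K(m)=0$ otherwise, and let $f_\square$ be the $EL5$ operator: $f_\square(m)=1$ if $m=1$ and $f_\square(m)=0$ otherwise.

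It then remains to verify the truth conditions of Definition \ref{205} and the $EL5$-condition of Definition \ref{210}. Conditions (i)--(iv) hold because $f_\square$ is the $EL5$ operator and the algebra has DP; (v) holds by the definitions of $\mathit{BEL}$ and $\mathit{TRUE}$; (vii) reduces to the single instance $f_\square(1)\le f_\square(f_K(1))$, which holds since $1\in\mathit{BEL}$; and (viii), $f_K(m)\le f_\neg(f_\neg(m))$, holds because $f_\neg(f_\neg(m))=1$ exactly for $m\in\{d,1\}=\mathit{BEL}$. The one genuinely delicate point --- and the reason $\mathit{BEL}$ must be taken as small as $\{d,1\}$ --- is the distribution condition (vi), $f_K(f_\rightarrow(m,m'))\le f_\rightarrow(f_K(m),f_K(m'))$: since $f_K$ is two-valued this reduces to checking that $m\in\mathit{BEL}$ and $m'\notin\mathit{BEL}$ imply $f_\rightarrow(m,m')\notin\mathit{BEL}$, which one confirms by running through the finitely many pairs with $m\in\{d,1\}$ and $m'\in\{0,a,b\}$.

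Finally, taking distinct variables $x_0,x_1$ and $\gamma$ with $\gamma(x_0)=a$, $\gamma(x_1)=b$, one gets $\gamma(x_0\vee x_1)=a\vee b=d\in\mathit{BEL}$, so $(\mathcal{M},\gamma)\vDash K(x_0\vee x_1)$, whereas $\gamma(Kx_0)=f_K(a)=0$ and $\gamma(Kx_1)=f_K(b)=0$, so $\gamma(Kx_0\vee Kx_1)=0\notin\mathit{TRUE}$ and $(\mathcal{M},\gamma)\nvDash Kx_0\vee Kx_1$. Hence $(\mathcal{M},\gamma)\nvDash K(x_0\vee x_1)\rightarrow(Kx_0\vee Kx_1)$, and by soundness of $EL5$ we conclude $\nvdash_{EL5}K(\varphi\vee\psi)\rightarrow(K\varphi\vee K\psi)$. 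The main obstacle is locating the correct $\mathit{BEL}$ so that conditions (vi) and (viii) survive while the disjunction counterexample persists; once $\mathit{BEL}=\{d,1\}$ is fixed, the rest is routine bookkeeping on a five-element algebra.
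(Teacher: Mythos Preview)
Your proof is correct. The overall strategy coincides with the paper's---build an $EL5$-model in which $\mathit{BEL}$ is a filter that is not prime, so that some $m\vee m'\in\mathit{BEL}$ with $m,m'\notin\mathit{BEL}$, and invoke soundness---but the underlying Heyting algebras differ. The paper takes the Lindenbaum--Tarski algebra of IPC, picks any ultrafilter $\mathit{TRUE}$ and any non-prime filter $\mathit{BEL}\subseteq\mathit{TRUE}$ (e.g.\ an intersection of two ultrafilters), and defines $f_\square,f_K$ as two-valued exactly as you do; condition (vi) then falls out from the filter property of $\mathit{BEL}$ via $f_\wedge(m,f_\rightarrow(m,m'))\le m'$. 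You instead work with a concrete five-element algebra and verify everything by finite case analysis. Your choice $\mathit{BEL}=\{d,1\}$ is in fact the principal filter generated by $d$, and it is non-prime precisely because $a\vee b=d$; recognising this would let you replace the case check for (vi) by the same one-line filter argument the paper uses. The trade-off: your finite model is more elementary and entirely self-contained, while the paper's construction avoids any explicit bookkeeping at the cost of appealing to existence of suitable filters on an infinite algebra.
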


\begin{proof}
We consider the Lindenbaum-Tarski algebra of IPC. This is a Heyting algebra with disjunction property -- in fact, the top element $f_\top$ is the class of all intuitionistic theorems, and the bottom element $f_\bot$ is the class of all intuitionistic contradictions. Let $\mathit{TRUE}$ be any ultrafilter and let $BEL\subseteq\mathit{TRUE}$ be any filter which is not prime. Of course, such filters exist: consider, e.g., the intersection of two distinct ultrafilters. We define the operations $f_\square$ and $f_K$ as in the proof of Theorem \ref{660}, where, of course, $0$ and $1$ are replaced by $f_\bot$ and $f_\top$, respectively. Then all truth conditions of an $EL5$ model are fulfilled. Again, we show this in detail only for truth condition (vi): $f_K(f_\rightarrow(m,m'))\le f_\rightarrow( f_K(m),f_K(m'))$, for all propositions $m$. We may assume that $f_K(f_\rightarrow(m,m'))=f_\top$, i.e. $f_\rightarrow(m,m')\in\mathit{BEL}$. It suffices to show that $f_K(m)\le f_K(m')$. We may assume $f_K(m)=f_\top$, i.e. $m\in \mathit{BEL}$. By definition of the relative pseudo-complement in a Heyting algebra, we have $f_\wedge(m,f_\rightarrow(m,m'))\le m'$. But $\mathit{BEL}$ is a filter containing $m$ and $f_\rightarrow(m,m')$. It follows that $m'\in\mathit{BEL}$, i.e. $f_K(m')=f_\top$.
\end{proof}

The next result is an adaption of [\cite{lewjlc2}], Theorem 5.1].

\begin{theorem}\label{720}
Let $\mathcal{L}\in\{EL3^-, EL3, EL4, EL5\}$ and let $\Phi\cup\{\chi\}\subseteq Fm_0$ be a set of propositional formulas. Then
\begin{equation*}
\square\Phi\vdash_\mathcal{L}\square\chi\Leftrightarrow\Phi\vdash_{IPC}\chi.
\end{equation*}
\end{theorem}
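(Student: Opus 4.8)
The plan is to treat the two implications separately. The implication ``$\Phi\vdash_{IPC}\chi\Rightarrow\square\Phi\vdash_\mathcal{L}\square\chi$'' is the routine one. Since $\vdash_{IPC}$ is finitary, I may fix $\varphi_1,\dots,\varphi_n\in\Phi$ with $\{\varphi_1,\dots,\varphi_n\}\vdash_{IPC}\chi$, and by the deduction theorem for IPC the formula $\varphi_1\rightarrow(\varphi_2\rightarrow\cdots\rightarrow(\varphi_n\rightarrow\chi)\cdots)$ is a theorem of IPC, hence an instance of the axiom scheme (INT). Applying rule AN yields $\vdash_\mathcal{L}\square(\varphi_1\rightarrow(\varphi_2\rightarrow\cdots\rightarrow(\varphi_n\rightarrow\chi)\cdots))$. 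Recalling that axiom K, $\square(\alpha\rightarrow\beta)\rightarrow(\square\alpha\rightarrow\square\beta)$, is a theorem of each of our logics, I then peel off the antecedents one at a time, using the formulas $\square\varphi_1,\dots,\square\varphi_n\in\square\Phi$ together with iterated applications of K and MP, and obtain $\square\Phi\vdash_\mathcal{L}\square\chi$. This argument is uniform over $\mathcal{L}\in\{EL3^-,EL3,EL4,EL5\}$, since each has (INT), AN, MP and axiom K.

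For the converse I argue contrapositively and use only soundness (recall $\Phi\vdash_\mathcal{L}\varphi$ implies $\Phi\Vdash_\mathcal{L}\varphi$). Assume $\Phi\nvdash_{IPC}\chi$. Exactly as in the proof of [\cite{lewjlc2}, Theorem~5.1], completeness of IPC with respect to Heyting algebras with the Disjunction Property provides a Heyting algebra $\mathcal{H}$ with DP and an assignment $v\colon V\rightarrow H$ such that $v(\varphi)=f_\top$ for every $\varphi\in\Phi$ while $v(\chi)\neq f_\top$. Concretely, one takes a rooted Kripke countermodel for $\Phi\nvdash_{IPC}\chi$; passing to the submodel generated by the root makes every $\varphi\in\Phi$ forced at all worlds, hence of value $f_\top$ in the algebra of upward-closed sets, which has DP because the underlying poset has a root, while $\chi$ keeps a value $\neq f_\top$. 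In particular $\mathcal{H}$ is nontrivial, so by Zorn's Lemma it carries an ultrafilter; fix one and call it $\mathit{TRUE}$.

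I now expand $\mathcal{H}$ to an $EL5$-model $\mathcal{M}$ by setting $f_\square(m):=f_\top$ if $m=f_\top$ and $f_\square(m):=f_\bot$ otherwise, and by choosing the simplest possible epistemic data, $\mathit{BEL}:=\mathit{TRUE}$ and $f_K:=\mathrm{id}_H$. The truth conditions (i)--(iv) of Definition~\ref{205} and the defining clause of an $EL5$-model in Definition~\ref{210} are checked exactly as for the modal models of \cite{lewjlc2} (it is DP that makes (i) and (iv) go through); condition (v) holds because $\mathit{BEL}=\mathit{TRUE}$ and $f_K$ is the identity; conditions (vi) and (vii) reduce to trivial identities for $f_K=\mathrm{id}_H$; and (viii) is just the Heyting law $m\le f_\neg(f_\neg(m))$. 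Thus $\mathcal{M}$ is an $EL5$-model, hence also an $\mathcal{L}$-model for every $\mathcal{L}\in\{EL3^-,EL3,EL4,EL5\}$. Taking $\gamma:=v$, we get $\gamma(\square\varphi)=f_\square(f_\top)=f_\top\in\mathit{TRUE}$ for each $\varphi\in\Phi$, so $(\mathcal{M},\gamma)\vDash\square\Phi$, whereas $\gamma(\square\chi)=f_\square(v(\chi))=f_\bot\notin\mathit{TRUE}$, so $(\mathcal{M},\gamma)\nvDash\square\chi$. Hence $\square\Phi\Vdash_\mathcal{L}\square\chi$ fails, and by soundness of $\mathcal{L}$ we conclude $\square\Phi\nvdash_\mathcal{L}\square\chi$.

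The only genuine obstacle is one already met in \cite{lewjlc2}: the relative Lindenbaum--Tarski algebra of IPC modulo $\Phi$ need not have the Disjunction Property (e.g.\ when $\Phi$ contains a disjunction), so it cannot serve directly as the base of the countermodel, and the detour through a rooted Kripke model is what restores DP. By contrast the epistemic layer contributes essentially nothing, since $f_K:=\mathrm{id}_H$ and $\mathit{BEL}:=\mathit{TRUE}$ satisfy (v)--(viii) trivially; this is precisely why the statement for $EL3^-,EL3,EL4,EL5$ coincides with the purely modal statement of [\cite{lewjlc2}, Theorem~5.1].
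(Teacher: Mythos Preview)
Your proof is correct and follows essentially the same strategy as the paper: the easy direction via AN and distribution, and the hard direction by building an $EL5$-countermodel on top of the Heyting-algebra countermodel from \cite{lewjlc2}, Theorem~5.1. The only variation is cosmetic: the paper trivializes the epistemic layer by taking $\mathit{BEL}=\{f_\top\}$ and $f_K:=f_\square$, whereas you take $\mathit{BEL}=\mathit{TRUE}$ and $f_K:=\mathrm{id}_H$; both choices make conditions (v)--(viii) immediate, so this is not a genuine difference in approach.
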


\begin{proof}
The right-to-left direction can be shown, e.g., by induction on the length of derivations. In order to show the left-to-right direction, we assume $\Phi\nvdash_{IPC}\chi$ and construct an $EL5$-model $(\mathcal{M},\varepsilon)$ such that $(\mathcal{M},\varepsilon)\vDash\square\Phi$ and $(\mathcal{M},\varepsilon)\nvDash\square\chi$. Then $\square\Phi\nVdash_{EL5}\square\chi$ and, by soundness, $\square\Phi\nvdash_{EL5}\square\chi$. The model is constructed exactly in the same way as in [\cite{lewjlc2}], Theorem 5.1], but with the additional ingredients of the set $\mathit{BEL}$ of known propositions and the function $f_K$. If we define these as $\mathit{BEL}=\{f_\top\}$ and $f_K(m)=f_\square(m)$, for all propositions $m$, then all truth conditions of an $EL5$-model are satisfied. 
\end{proof} 

Theorem \ref{720} ensures that the map $\varphi\mapsto\square\varphi$ is an embedding of IPC into the modal systems presented in this paper. Of course, that mapping does not extend to an embedding of $IEL$ into our modal logics. In fact, if one adds \textit{tertium non datur} to $IEL$, then intuitionistic knowledge collapses into classical truth. 

A prominent method to interpret IPC through a classical logic is G\"odel's translation of IPC into modal system $S4$ via the map $\varphi\mapsto tr(\varphi)$, where $tr(\varphi)$ is the result of prefixing each sub-formula of $\varphi$ with $\square$. G\"odel considered $S4$ as a calculus of classical provability. G\"odel's translation extends to embeddings of $IEL^-$ and $IEL$ into $S4V^-$ and $S4V$, respectively, where the latter are bi-modal logics augmenting $S4$ with epistemic axioms for a verification modality $V$. The BHK reading of intuitionistic epistemic principles then is explained through the classical logics $S4V^-$ and $S4V$ (see \cite{pro, artpro}). A comparison of those bi-modal logics with the present approach could be a promising task for future work. An interesting feature of our modal logics is that the modal operator $\square$ plays the role of a truth predicate for \textit{intuitionistic} truth. In fact, under the assumption that the proposition denoted by $\top$ stands for intuitionistic truth, the scheme $\square\varphi\leftrightarrow (\varphi\equiv\top)$ of Lemma \ref{110} can be seen as an adaption of the well-known Tarski-biconditionals, expressed here in the object language of our classical modal logics:\\ 
$\square\varphi$ (i.e. ``$\varphi$ is intuitionistically true") iff $\varphi$ holds intuitionistically.\footnote{The biconditionals $\square\varphi\leftrightarrow (\varphi\equiv\top)$ are also valid in the Lewis-style modal systems $S1+SP$,  $S3$, $S4$ and $S5$. However, formula $\top$ does not stand for intuitionistic truth in those systems --  in fact, $\square(\varphi\vee\psi)\rightarrow(\square\varphi\vee\square\psi)$ is not a theorem.}\\

The Disjunction Property of IPC is mirrored in our classical logics:

\begin{corollary}\label{800}
Suppose $\mathcal{L}\in\{L3, EL3^-, EL3, EL4, EL5\}$ and $\varphi,\psi\in Fm_0$. The following restricted disjunction property holds:
\begin{equation*}
\vdash_\mathcal{L}\square\varphi\vee\square\psi\text{ }\Rightarrow\text{ }\vdash_\mathcal{L}\square\varphi\text{ or }\vdash_\mathcal{L}\square\psi.
\end{equation*}
\end{corollary}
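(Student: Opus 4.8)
The plan is to reduce the statement to the Disjunction Property of IPC via the embedding established in Theorem \ref{720}. First I would observe that it suffices to treat the epistemic logics, since $L3$ is a subsystem of $EL3^-$ and all five systems are conservative over one another on formulas of the shape $\square\varphi\vee\square\psi$ with $\varphi,\psi\in Fm_0$; more directly, Theorem \ref{720} is stated uniformly for all of $\{EL3^-,EL3,EL4,EL5\}$, and an analogous statement for $L3$ follows since the model constructions used there never needed the epistemic axioms in an essential way (one may take $\mathit{BEL}=\{f_\top\}$, $f_K=f_\square$ as in the proof of Theorem \ref{720}).

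The core argument runs as follows. Suppose $\vdash_\mathcal{L}\square\varphi\vee\square\psi$ with $\varphi,\psi\in Fm_0$. Then trivially $\square\Phi\vdash_\mathcal{L}\square\varphi\vee\square\psi$ for any $\Phi$; but I want to land inside the scope of Theorem \ref{720}, which concerns derivations of a single boxed formula, not a boxed disjunction. So the key step is to pass to a model. Assume toward a contradiction that $\nvdash_\mathcal{L}\square\varphi$ and $\nvdash_\mathcal{L}\square\psi$. By Theorem \ref{720} (with $\Phi=\emptyset$), this gives $\nvdash_{IPC}\varphi$ and $\nvdash_{IPC}\psi$. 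Now invoke the Disjunction Property of IPC in contrapositive form: since $\nvdash_{IPC}\varphi$ and $\nvdash_{IPC}\psi$, there are Heyting-algebra models (indeed one can use a single Heyting algebra, e.g. a product of the two witnessing algebras, or argue with the Kripke/algebraic semantics of IPC directly) witnessing $\nvdash_{IPC}\varphi\vee\psi$ — equivalently, $\nvdash_{IPC}\varphi\vee\psi$ because IPC has the disjunction property. Then, again by Theorem \ref{720} (left-to-right, with $\Phi=\emptyset$, $\chi=\varphi\vee\psi$), we get $\nvdash_\mathcal{L}\square(\varphi\vee\psi)$. Finally, using soundness together with truth condition (i) of a model (or axiom (A1) directly), $\square(\varphi\vee\psi)\rightarrow(\square\varphi\vee\square\psi)$ is a theorem, so from $\vdash_\mathcal{L}\square\varphi\vee\square\psi$ we cannot immediately conclude — here I instead argue the other way: take any $\mathcal{L}$-model $(\mathcal{M},\gamma)$; since every model has the Disjunction Property (as remarked after Definition \ref{205}), $\gamma(\square\varphi\vee\square\psi)=f_\top$ forces $f_\square(\gamma(\varphi))=f_\top$ or $f_\square(\gamma(\psi))=f_\top$, i.e. $(\mathcal{M},\gamma)\vDash\square\varphi$ or $(\mathcal{M},\gamma)\vDash\square\psi$. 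Combining with the two models produced above (one where $\square\varphi$ fails, one where $\square\psi$ fails) yields the contradiction if one chooses a single model in which both fail; that single model is exactly the one from the IPC disjunction-property witness for $\varphi\vee\psi$, pushed through the construction of Theorem \ref{720}.

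Let me streamline: the clean route is to construct, from $\nvdash_{IPC}\varphi$ and $\nvdash_{IPC}\psi$ and the Disjunction Property of IPC, a single IPC-model refuting $\varphi\vee\psi$, then apply the model construction of Theorem \ref{720} to obtain an $EL5$-model $(\mathcal{M},\varepsilon)$ with $(\mathcal{M},\varepsilon)\nvDash\square(\varphi\vee\psi)$, hence by truth condition (i), $(\mathcal{M},\varepsilon)\nvDash\square\varphi\vee\square\psi$, contradicting soundness of $\vdash_\mathcal{L}\square\varphi\vee\square\psi$. For $\mathcal{L}=L3$ one uses the same model stripped of (or trivially equipped with) the epistemic apparatus, noting $L3\subseteq EL5$ so soundness transfers. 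The main obstacle is bookkeeping: ensuring that the model supplied by Theorem \ref{720} can be chosen to simultaneously refute $\varphi\vee\psi$ (rather than two separate models each refuting one disjunct), which is precisely where the Disjunction Property of IPC is invoked — without it, $\nvdash_{IPC}\varphi$ and $\nvdash_{IPC}\psi$ would not yield $\nvdash_{IPC}\varphi\vee\psi$. Everything else (soundness, truth condition (i), the subsystem relation $L3\subseteq EL5$) is routine and already available in the excerpt.
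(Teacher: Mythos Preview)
Your approach is essentially the paper's --- both reduce to Theorem \ref{720} and the Disjunction Property of IPC --- but you run it contrapositively and semantically where the paper argues directly and syntactically. The paper's proof is a single chain: from $\vdash_\mathcal{L}\square\varphi\vee\square\psi$ derive $\vdash_\mathcal{L}\square(\varphi\vee\psi)$ using the easily derivable \emph{converse} of (A1), namely $(\square\varphi\vee\square\psi)\rightarrow\square(\varphi\vee\psi)$ (obtained from the IPC theorems $\varphi\rightarrow\varphi\vee\psi$, $\psi\rightarrow\varphi\vee\psi$ by AN and distribution); then Theorem \ref{720} gives $\vdash_{IPC}\varphi\vee\psi$, DP of IPC gives $\vdash_{IPC}\varphi$ or $\vdash_{IPC}\psi$, and Theorem \ref{720} again finishes. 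No model construction is invoked at this level.

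Your streamlined version contains a directional slip that is worth flagging. From $(\mathcal{M},\varepsilon)\nvDash\square(\varphi\vee\psi)$ you infer $(\mathcal{M},\varepsilon)\nvDash\square\varphi\vee\square\psi$ ``by truth condition (i)'', but truth condition (i) (equivalently axiom (A1)) is $\square(\varphi\vee\psi)\rightarrow(\square\varphi\vee\square\psi)$, the \emph{wrong} direction for that inference. What you actually need here is precisely the converse implication the paper derives and uses. Likewise, your intermediate appeal to ``the Disjunction Property of the model'' to split $\gamma(\square\varphi\vee\square\psi)$ is misplaced: soundness only gives $\gamma(\square\varphi\vee\square\psi)\in\mathit{TRUE}$, not $=f_\top$, and model DP concerns the filter $\{f_\top\}$; the correct semantic argument uses primeness of the ultrafilter $\mathit{TRUE}$ together with truth condition (iv). These are easy repairs, and once made your argument is exactly the contrapositive of the paper's --- but the paper's direct syntactic route avoids the detours entirely.
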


\begin{proof}
Suppose $\vdash_\mathcal{L}\square\varphi\vee\square\psi$. Of course, $\varphi\rightarrow(\varphi\vee\psi)$ and $\psi\rightarrow(\varphi\vee\psi)$ are theorems of IPC. By AN and distribution, $\square\varphi\rightarrow\square(\varphi\vee\psi)$ and $\square\psi\rightarrow\square(\varphi\vee\psi)$. Again by IPC, we obtain $(\square\varphi\vee\square\psi)\rightarrow\square(\varphi\vee\psi)$. Now we apply Theorem \ref{720} and the Disjunction Property of IPC.
\end{proof}

\section{Algebraic semantics for $IEL^-$ and $IEL$}

The goal of this final section is to modify our algebraic semantics towards a semantics for the logics $IEL^-$ and $IEL$ designed in \cite{artpro} where Kripke-style semantics is presented. First, we observe that dropping the scheme of \textit{tertium non datur} and adding the axiom scheme $\varphi\rightarrow\square\varphi$ to $EL3^-$ and $EL3$ would result in systems which are essentially equivalent to $IEL^-$ and $IEL$, respectively. In fact, by rule AN, $\varphi\equiv\square\varphi$ then would be a theorem and, by SP, the formulas $\varphi$ and $\square\varphi$ could be replaced by each other in every context. The semantical counterparts of those modifications are the following. First, without \textit{tertium non datur}, the set $\mathit{TRUE}$ of true propositions of a model (see Definition \ref{205}) is no longer required to be an ultrafilter, but only a prime filter. Second, the theorem $\varphi\equiv\square\varphi$ corresponds to the semantic condition: $f_\square(m)=m$, for all propositions $m$ of a given model. Truth condition (iv) of a model then forces the equality $\mathit{TRUE}=\{f_\top\}$, i.e. $\mathit{TRUE}$ is the smallest prime filter of the underlying Heyting algebra and classical truth becomes intuitionistic truth. The conditions concerning the set $BEL\subseteq M$ of believed propositions remain unchanged. Consequently, we may define an algebraic model for $IEL^-$ as follows:

\begin{definition}\label{1000}
An $IEL^-$-model is a Heyting algebra 
\begin{equation*}
\mathcal{M}=(M, \mathit{BEL}, f_\bot, f_\top, f_\vee, f_\wedge, f_\rightarrow, f_K)
\end{equation*}
with a set $\mathit{BEL}\subseteq M$ and an additional unary operation $f_K$ such that for all propositions $m,m'\in M$ the following truth conditions hold:
\begin{enumerate}
\item $f_\top\in\mathit{BEL}$
\item $f_K(m)=f_\top\Leftrightarrow m\in\mathit{BEL}$
\item $m\le f_K(m)$
\item $f_K(f_\rightarrow(m,m'))\le f_\rightarrow(f_K(m),f_K(m'))$
\item $f_\vee(m,m')=f_\top$ $\Rightarrow$ ($m=f_\top$ or $m'=f_\top$)
\end{enumerate}
If additionally $f_K(m)\le f_\neg(f_\neg(m))$ holds for all $m\in M$, then we call $\mathcal{M}$ an $IEL$-model.
\end{definition}

Note that truth condition (v) is the Disjunction Property DP which ensures that the smallest filter $\{f_\top\}$ of the underlying Heyting algebra is prime. Recall that DP is implicitly satisfied by all $EL3^-$-models. Also truth condition (i), $f_\top\in \mathit{BEL}$, is an implicit property of each $EL3^-$-model. \\

The language of Intuitionistic Epistemic Logic does not contain the symbol $\square$. In the following we shall work with the set of formulas $Fm_e=\{\varphi\in Fm\mid$ symbol $\square$ does not occur in $\varphi\}$.

\begin{definition}\label{1040}
Suppose $\mathcal{M}$ is an $IEL^-$-model, $\gamma\colon V\rightarrow M$ is an assignment and $\varphi\in Fm_e$. Satisfaction is defined as follows: 
\begin{equation*}
(\mathcal{M},\gamma)\vDash\varphi :\Leftrightarrow\gamma(\varphi)=f_\top
\end{equation*}
We say that a formula $\varphi$ is valid in logic $IEL^-$ (in logic $IEL$) if $(\mathcal{M},\gamma)\vDash\varphi$, for all $IEL^-$-models (all $IEL$-models) $\mathcal{M}$ and for all corresponding assignments $\gamma\in M^V$.
\end{definition}

In contrast to our \textit{strong} completeness theorems for the classical logics $EL3^-$, $EL3$, $EL4$, $EL5$, we prove here only \textit{weak} soundness and completeness of $IEL^-$ and $IEL$ w.r.t. the proposed algebraic semantics. 

\begin{theorem}[Soundness and Completeness]\label{1060}
Let $\varphi\in Fm_e$. Then $\varphi$ is a theorem of $IEL^-$ (of $IEL$) if and only if $\varphi$ is valid in $IEL^-$ (in $IEL$).
\end{theorem}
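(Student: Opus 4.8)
The plan is to prove weak soundness and weak completeness separately, leveraging the close correspondence between $IEL^-$/$IEL$ and the $\square$-free fragments of $EL3^-$/$EL3$ observed informally in the text before Definition \ref{1000}.

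\textbf{Soundness.} First I would verify that each axiom of $IEL^-$ (resp.\ $IEL$) evaluates to $f_\top$ under every assignment into every $IEL^-$-model (resp.\ $IEL$-model), and that the rule MP preserves this property. The intuitionistic axioms (INT) evaluate to $f_\top$ because $\mathcal{M}$ is a Heyting algebra. For (A6), the distribution axiom, truth condition (iv) gives $f_K(f_\rightarrow(m,m'))\le f_\rightarrow(f_K(m),f_K(m'))$, and recalling $m\le m'\Leftrightarrow f_\rightarrow(m,m')=f_\top$, this says exactly $\gamma(K(\varphi\rightarrow\psi)\rightarrow(K\varphi\rightarrow K\psi))=f_\top$. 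Intuitionistic co-reflection $\varphi\rightarrow K\varphi$ is handled by truth condition (iii), $m\le f_K(m)$. For $IEL$, axiom (A8) is handled by the extra condition $f_K(m)\le f_\neg(f_\neg(m))$. Preservation under MP: if $\gamma(\varphi)=f_\top$ and $\gamma(\varphi\rightarrow\psi)=f_\top$, then $f_\rightarrow(\gamma(\varphi),\gamma(\psi))=f_\top$ means $\gamma(\varphi)\le\gamma(\psi)$, hence $\gamma(\psi)=f_\top$. (Truth conditions (i), (ii), (v) are not needed for soundness but will be needed for completeness.)

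\textbf{Completeness.} I would use a Lindenbaum--Tarski / canonical-model argument adapted from the proof of Lemma \ref{630}, but working inside $IEL^-$ (resp.\ $IEL$) rather than $EL5$. Suppose $\varphi_0\in Fm_e$ is not a theorem of $IEL^-$. Since $IEL^-$ extends IPC and is closed under MP, the set of its theorems is a consistent theory, and by a standard Lindenbaum argument there is a prime theory $\Phi\subseteq Fm_e$ (closed under $\vdash_{IEL^-}$, with $\varphi\vee\psi\in\Phi$ implying $\varphi\in\Phi$ or $\psi\in\Phi$) that contains all theorems but not $\varphi_0$. Here I should be slightly careful: unlike the classical case, for the weak completeness statement it suffices to build a single model separating $\varphi_0$ from the theorems, so I can take $\Phi$ to be a prime extension of $\{\psi:\vdash_{IEL^-}\psi\}$ avoiding $\varphi_0$, obtained by the usual prime-theory lemma for intuitionistic logic. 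Define $\approx_\Phi$ on $Fm_e$ by $\varphi\approx_\Phi\psi :\Leftrightarrow \vdash_{IEL^-}\varphi\leftrightarrow\psi$ (using provable intuitionistic equivalence rather than $\equiv$, which is unavailable without $\square$); this is a congruence for $\vee,\wedge,\rightarrow$ because IPC proves the replacement lemmas, and for $K$ because from $\vdash\varphi\leftrightarrow\psi$ and (A6) one derives $\vdash K\varphi\leftrightarrow K\psi$. Set $M=Fm_e/\!\approx_\Phi$, with $f_\bot=\overline{\bot}$, $f_\top=\overline{\top}$, $f_*$ induced by the connectives, and $f_K(\overline\varphi)=\overline{K\varphi}$. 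Then $M$ is a Heyting algebra (as in Lemma \ref{630}), truth condition (v) holds because $\Phi$ is prime, truth condition (iii) holds because $\vdash_{IEL^-}\varphi\rightarrow K\varphi$, and truth condition (iv) holds because $\vdash_{IEL^-}K(\varphi\rightarrow\psi)\rightarrow(K\varphi\rightarrow K\psi)$. For truth conditions (i) and (ii) I set $\mathit{BEL}=\{\overline\varphi : \vdash_{IEL^-}K\varphi\}$; then (i) holds since $\vdash_{IEL^-}\top$ and hence $\vdash_{IEL^-}K\top$, and (ii) reads $\overline{K\varphi}=f_\top\Leftrightarrow\overline\varphi\in\mathit{BEL}$, which holds because $\overline{K\varphi}=f_\top$ means $\vdash_{IEL^-}K\varphi\leftrightarrow\top$, i.e.\ $\vdash_{IEL^-}K\varphi$. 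For $IEL$, the extra condition $f_K(m)\le f_\neg(f_\neg(m))$ follows from axiom (A8): $\vdash_{IEL}K\varphi\rightarrow\neg\neg\varphi$. Finally, with $\gamma(x)=\overline x$ one shows $\gamma(\psi)=\overline\psi$ for all $\psi\in Fm_e$ by induction, so $\gamma(\varphi_0)=\overline{\varphi_0}\neq f_\top$ because $\varphi_0\notin\Phi$; hence $(\mathcal{M},\gamma)\nvDash\varphi_0$ and $\varphi_0$ is not valid.

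\textbf{Main obstacle.} The delicate point is the prime-theory construction and the verification of truth condition (ii), namely checking that $\overline{K\varphi}=f_\top$ forces $\overline\varphi\in\mathit{BEL}$. Because the model is built from provable equivalence over the set of $IEL^-$-theorems rather than from a maximal consistent set, one must be sure that $\overline{K\varphi}=f_\top$ is genuinely equivalent to $\vdash_{IEL^-}K\varphi$ (and not merely to $K\varphi\in\Phi$), and then that $\mathit{BEL}$ so defined is compatible with the primeness of $\Phi$ and with truth conditions (iii)--(iv). One must also confirm that the intended separating equality $\gamma(\varphi_0)\ne f_\top$ is the right semantic witness: under Definition \ref{1040}, $(\mathcal{M},\gamma)\nvDash\varphi_0$ means exactly $\gamma(\varphi_0)\ne f_\top$, which holds since $\varphi_0$ is not a theorem and $f_\top$ is the class of theorems. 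Once these bookkeeping points are settled, the rest is routine Heyting-algebra verification exactly paralleling Lemma \ref{630}.
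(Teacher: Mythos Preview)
Your soundness argument and your overall completeness strategy (a Lindenbaum--Tarski quotient) match the paper's. However, the completeness half contains a confusion that leads to a genuine gap.

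You introduce a prime theory $\Phi \supseteq \{\psi : {}\vdash_{IEL^-}\psi\}$ avoiding $\varphi_0$, but then define $\approx_\Phi$ as \emph{provable} equivalence, $\varphi\approx_\Phi\psi \Leftrightarrow {}\vdash_{IEL^-}\varphi\leftrightarrow\psi$, which does not depend on $\Phi$ at all. Consequently your quotient $M=Fm_e/{\approx_\Phi}$ is simply the Lindenbaum--Tarski algebra of $IEL^-$, $f_\top$ is the class of theorems (as you yourself note in the ``Main obstacle'' paragraph), and $\overline{\varphi_0}\neq f_\top$ follows already from $\varphi_0$ not being a theorem. So $\Phi$ is doing no work in the construction. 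The gap is your justification of truth condition (v): ``$\Phi$ is prime'' is irrelevant, because in this algebra $f_\vee(\overline\varphi,\overline\psi)=f_\top$ means $\vdash_{IEL^-}\varphi\vee\psi$, so (v) is exactly the Disjunction Property of the \emph{logic} $IEL^-$ (respectively $IEL$), not of some extension. This is a nontrivial fact; the paper's proof handles it by citing the Disjunction Property established by Artemov and Protopopescu. You need either that citation or an independent argument.

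If instead you want primeness of $\Phi$ to deliver (v), you must actually use $\Phi$ in the quotient: take $\varphi\sim\psi \Leftrightarrow (\varphi\leftrightarrow\psi)\in\Phi$ and $\mathit{BEL}=\{\overline\varphi : K\varphi\in\Phi\}$. Then (v) does follow from primeness of $\Phi$, and (i)--(iv) still hold since $\Phi$ contains all theorems and is closed under MP; but you should then drop the claim that $f_\top$ is the class of theorems (it is now the class of members of $\Phi$), and the verification of (ii) becomes $\overline{K\varphi}=f_\top \Leftrightarrow K\varphi\in\Phi \Leftrightarrow \overline\varphi\in\mathit{BEL}$.
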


\begin{proof}
One easily checks that an $IEL^-$-model satisfies all axioms of $IEL^-$. Also rule MP is sound, for if in some model $\gamma(\psi\rightarrow\varphi)=f_\top$ and $\gamma(\psi)=f_\top$, then $\gamma(\psi)\le\gamma(\varphi)$ and thus $\gamma(\varphi)=f_\top$. By induction on derivations, all theorems of $IEL^-$ are valid. By definition, an $IEL$-model satisfies additionally all formulas of the form $K\varphi\rightarrow\neg\neg\varphi$. Thus, $IEL$ is sound w.r.t. the class of all $IEL$-models. In order to prove completeness, we consider the Lindenbaum-Tarski algebra of logic $IEL^-$. Recall that such an algebra is given by the equivalence classes of formulas modulo the relation $\approx$ defined by $\varphi\approx\psi \Leftrightarrow \varphi\leftrightarrow\psi$ is a theorem, along with the canonical operations on the equivalence classes. In the case of our epistemic language, this yields the following operations (where $\overline{\varphi}$ denotes the equivalence class of formula $\varphi$ modulo $\approx$): $f_K(\overline{\varphi}):=\overline{K\varphi}$ and $f_*(\overline{\varphi},\overline{\psi}):=\overline{\varphi * \psi}$, for $*\in\{\vee, \wedge, \rightarrow\}$. It remains to check that $\approx$ is a congruence relation, i.e. $\varphi_1\approx\psi_1$, $\varphi_2\approx\psi_2$ implies $K\varphi_1\approx K\psi_1$ and $\varphi_1 * \varphi_2\approx\psi_1 * \psi_2$, for $*\in\{\vee, \wedge, \rightarrow\}$. Suppose $\varphi_1\leftrightarrow\psi_1$ is a theorem of $IEL^-$. Then, by co-reflection, $K(\varphi_1 \rightarrow\psi_1)$ and $K(\psi_1 \rightarrow\varphi_1)$ are theorems, too. The distribution axiom yields $K\varphi_1\leftrightarrow K\psi_1$. The remaining cases follow from properties of IPC. It follows that $\approx$ is a congruence relation. If $\varphi\leftrightarrow\psi$ is (the substitution-instance of) a theorem of IPC, then $\overline{\varphi}=\overline{\psi}$. Thus, the resulting algebra $\mathcal{M}$ satisfies the set of equations that axiomatizes the class of Heyting algebras and is therefore itself a Heyting algebra. Its top element is the congruence class $f_\top:=\overline{\top}$, i.e. the class of all $IEL^-$-theorems. In [\cite{artpro}, Theorem 12] it is shown that both $IEL^-$ and $IEL$ have the disjunction property, which means that the smallest theory, respectively, is prime. Then the Heyting algebra $\mathcal{M}$ has the Disjunction Property DP: $f_\vee(\overline{\varphi},\overline{\psi})=f_\top$ implies $\overline{\varphi}=f_\top$ or $\overline{\psi}=f_\top$. If we put $\mathit{BEL}=\{\overline{\varphi}\mid K\varphi$ is a theorem of $IEL^-\}$, then one easily verifies that $\mathcal{M}=(\{\overline{\varphi}\mid\varphi\in Fm_e\}, \mathit{BEL}, f_\bot, f_\top, f_\vee, f_\wedge, f_\rightarrow, f_K)$ satisfies all conditions of an $IEL^-$-model established in Definition \ref{1000}. Consider the assignment $\varepsilon$ defined by $x\mapsto \overline{x}$. By induction on the complexity of formulas one shows that $\varepsilon(\varphi)=\overline{\varphi}$, for any $\varphi\in Fm_e$. Thus, $(\mathcal{M},\varepsilon)\vDash\varphi\Leftrightarrow\varepsilon(\varphi)=\overline{\varphi}=f_\top\Leftrightarrow$ $\vdash_{IEL^-}\varphi$, for any $\varphi\in Fm_e$. So if $\varphi$ is not a theorem of $IEL^-$, then $\varphi$ is not satisfied by interpretation $(\mathcal{M},\varepsilon)$ and cannot be valid. We have shown completeness of $IEL^-$ w.r.t. the class of all $IEL^-$-models. Completeness of $IEL$ w.r.t. the class of all $IEL$-models follows similarly.
\end{proof}


\begin{thebibliography}{99}

\bibitem{artpro} S. Artemov and T. Protopopescu, \textit{Intuitionistic Epistemic Logic}, The Review of Symbolic Logic 9(2), 266--298, 2016.

\bibitem{blosus} S. L. Bloom and R. Suszko, \textit{Investigation into the sentential calculus with identity}, Notre Dame Journal of Formal Logic 13(3), 289 -- 308, 1972. 

\bibitem{hugcre} G. E. Hughes and M. J. Cresswell, \textit{A new introduction to modal logic}, Routledge, 1996.

\bibitem{lewsl1} S. Lewitzka, \textit{$\in_K$: A non-Fregean logic of explicit knowledge}, Studia Logica 97(2), 233--264, 2011.

\bibitem{lewjlc1} S. Lewitzka, \textit{Algebraic semantics for a modal logic close to S1}, Journal of Logic and Computation 26(5), 1769--1783, 2016.

\bibitem{lewsl} S. Lewitzka, \textit{Denotational semantics for modal systems S3--S5 extended by axioms for Propositional quantifiers and identity}, Studia Logica 103(3), 507--544, 2015.

\bibitem{lewjlc2} S. Lewitzka, \textit{A modal logic amalgam of classical and intuitionistic propositional logic}, Journal of Logic and Computation 27(1), 201--212, 2017. 

\bibitem{pro} T. Protopopescu, \textit{Intuitionistic Epistemology and Modal Logics of Verification}, In: Proceedings. Logics, Rationality and Interaction (LORI 2015). (Oct. 26--29, 2015). Ed. by Wiebe van der Hoek and Wesley Holliday. Lecture Notes in Computer Science 9394. Tapei: Springer, 2015, pp. 295--307.

\bibitem{sus} R. Suszko, \textit{Identity Connective and Modality}, Studia Logica 27, 7--39, 1971.

\bibitem{sus1} R. Suszko, \textit{Abolition of the fregean axiom}, Lecture Notes in Mathematics, 453:169--239 (1975), in: R. Parikh (ed.), Logic Colloquium, Springer Verlag, 2006.

\end{thebibliography}
\end{document}